\tikzset{
	node/.style={draw, rounded corners, minimum width=2.5cm, minimum height=1cm, align=center},
	arrow/.style={-{Latex}, thick},
}
\pgfplotsset{compat=1.18}
\newcommand{\R}{\mathbb{R}}
\newcommand{\Rplus}{\R_{\ge 0}}
\newcommand{\Rn}{\R^n}
\newcommand{\idn}{\mathbf{I}_n}
\newcommand{\normal}{\ensuremath{\mathcal{N}}}
\newcommand{\dirac}{\delta}
\newcommand{\id}{\text{Id}}
\newcommand{\density}[1]{f_{#1}}
\newcommand{\charfunc}{\mathbb{I}}
\newcommand{\symSkip}{\texttt{skip}}
\newcommand{\symSemi}{;~}
\newcommand{\symAsgn}{\texttt{:=}}
\newcommand{\symDist}{\ensuremath{\sim}}
\newcommand{\symIf}{\texttt{if}}
\newcommand{\symElse}{\texttt{else}}
\newcommand{\symObserve}{\texttt{observe}}
\newcommand{\blockStart}{\{}
\newcommand{\blockEnd}{\}}
\newcommand{\stmtSkip}{\symSkip}
\newcommand{\stmtSeq}[2]{{#1}\symSemi{#2}}
\newcommand{\stmtAsgn}[2]{{#1}~\symAsgn~{#2}}
\newcommand{\stmtDist}[2]{{#1}\symDist{#2}}
\newcommand{\stmtIf}[3]{\symIf~{#1}~\blockStart~{#2}~\blockEnd~\symElse~\blockStart~{#3}~\blockEnd}
\newcommand{\stmtObserve}[1]{\symObserve~{#1}}
\newcommand{\stmt}{P}
\newcommand{\gmstmt}{gm(\pi, \mu, \sigma)}
\newcommand{\gmg}{G_{\pi, \mu, \sigma}}
\newcommand{\bexp}{b}
\newcommand{\bexps}{\mathbb{B}}
\newcommand{\bexpsif}{\bexps_{\text{if}}}
\newcommand{\aexp}{e}
\newcommand{\aexps}{\mathbb{E}}
\newcommand{\params}{\Theta}
\newcommand{\entrytype}{\textit{entry}}
\newcommand{\detstatetype}{\textit{det}}
\newcommand{\rndstatetype}{\textit{rnd}}
\newcommand{\testtype}{\textit{test}}
\newcommand{\observetype}{\textit{observe}}
\newcommand{\exittype}{\textit{exit}}
\newcommand{\condfunc}{\textit{cond}}
\newcommand{\argfunc}{\textit{arg}}
\newcommand{\succfunc}{s_\path}
\newcommand{\node}{v}
\newcommand{\path}{\omega}
\newcommand{\pathset}[1]{\Omega^{#1}}
\newcommand{\p}{p}
\newcommand{\dist}{D}
\newcommand{\soga}[1]{\llbracket{#1}\rrbracket^S}
\newcommand{\exact}[1]{\llbracket #1 \rrbracket}
\newcommand{\mmop}{\mathcal{T}_\textrm{gm}}
\newcommand{\marg}[2]{\textrm{Marg}_{#1}(#2)}
\newcommand{\prob}[2]{P_{#1}(#2)}
\newcommand{\cond}[2]{#1_{\mid #2}}
\newcommand{\gm}{G}
\newcommand{\degas}[1]{\llbracket{#1}\rrbracket^{\Delta, \epsilon}}
\newcommand{\varset}{V}
\newcommand{\smoothEps}{\epsilon}
\newcommand{\smoothDelta}{\delta_\smoothEps}
\newcommand{\smoothOpBool}{\mathcal{B}_\smoothEps}
\renewcommand{\xi}{x_i}
\newcommand{\xj}{x_j}
\newcommand{\xk}{x_k}
\newcommand{\peps}{\p_{\smoothEps}}
\newcommand{\deps}{\dist_{\smoothEps}}
\newcommand{\meps}{\mu_{\smoothEps}}
\newcommand{\sigmaeps}{\Sigma_{\smoothEps}}
\newcommand{\vi}{\node_i}
\newcommand{\beps}{\bexp_{\smoothEps}}
\newcommand{\lpdist}{d_{LP}}
\newcommand{\degastool}{DeGAS\,}
\begin{document}
\title{DeGAS: Gradient-Based Optimization of Probabilistic Programs without Sampling}
%
\titlerunning{DeGAS}
%
\author{Francesca Randone\inst{1}\orcidID{0009-0002-3489-9600} \and
Romina Doz\inst{2}\orcidID{0009-0008-0210-6673} \and
Mirco Tribastone \inst{3}\orcidID{0000-0002-6018-5989} \and
Luca Bortolussi\inst{2}\orcidID{0000-0001-8874-4001} }

%
\institute{TU Wien, Wien, Austria \email{francesca.randone@tuwien.ac.at} \and
University of Trieste, Trieste, Italy \\
\and
IMT School for Advanced Studies Lucca, Lucca, Italy\\
}
\maketitle              
\begin{abstract}
We present DeGAS, a differentiable Gaussian approximate semantics for loopless probabilistic programs that enables sample-free, gradient-based optimization in models with both continuous and discrete components. DeGAS evaluates programs under a Gaussian-mixture semantics and replaces measure-zero predicates and discrete branches with a vanishing smoothing, yielding closed-form expressions for posterior and path probabilities. We prove differentiability of these quantities with respect to program parameters, enabling end-to-end optimization via standard automatic differentiation, without Monte Carlo estimators. On thirteen benchmark programs, DeGAS achieves accuracy and runtime competitive with variational inference and MCMC. Importantly, it reliably tackles optimization problems where sampling-based baselines fail to converge due to conditioning involving continuous variables.
%
\keywords{Probabilistic Programming \and Gaussian Mixtures \and Differentiable Semantics \and Sample-Free Optimization} 
\end{abstract}
\section{Introduction}

Probabilistic programming languages (PPLs) extend general-purpose languages with probabilistic primitives. In recent years they have received significant attention, motivated by their ability to model and analyze inherently stochastic systems such as randomized algorithms~\cite{karp1991introduction}, probabilistic cyber-physical systems~\cite{lee2016introduction}, and machine-learning models~\cite{borgstrom2011measure}.

Probabilistic programs (PPs) typically expose parameters that must be tuned to meet goals: maximize likelihood \cite{carpenter2017stan,bingham2019pyro}, satisfy safety constraints \cite{sato2019formal}, calibrate uncertainty \cite{chong2018guidelines}, or optimize bespoke utilities \cite{rainforth2016bayesian}. Current practice relies on sampling—either Markov chain Monte Carlo (MCMC) \cite{andrieu2003introduction} or variational inference (VI) \cite{blei2017variational}. While often effective, these methods can exhibit high-variance gradient estimates and brittle behavior on discontinuous objectives induced by branching and hard conditioning, frequently necessitating problem-specific reparameterization tricks \cite{DBLP:conf/nips/ZhangSSG12,DBLP:conf/nips/PakmanP13,DBLP:conf/nips/AfsharD15,DBLP:conf/icml/DinhBZM17,lee2018reparameterization}.

This paper asks the following: \emph{Can we obtain gradients for PP objectives without sampling?} We answer affirmatively by introducing DeGAS, a differentiable Gaussian-mixture approximate semantics that renders end-to-end evaluation differentiable, enabling optimization with off-the-shelf automatic differentiation (AD) and \emph{without} Monte Carlo estimators. The key idea is a principled, vanishing smoothing of discrete and measure-zero constructs (e.g., point masses and Boolean predicates), which yields differentiable densities and path probabilities throughout a program’s execution. This delivers deterministic objective values and gradients with respect to the program parameters.

DeGAS builds on the SOGA (second-order Gaussian approximation) semantics for a class of loop-free probabilistic programs~\cite{DBLP:journals/pacmpl/RandoneBIT24}. With SOGA, each program state is represented as a Gaussian mixture (GM) and program constructs update mixtures analytically (affine transforms, products, and conditioning via truncation) to approximate the posterior over program variables. DeGAS smooths the discontinuities that arise under branching and conditioning by injecting small Gaussian noise where needed and relaxing predicates accordingly, while preserving closure under mixture operations. We prove differentiability of posterior distributions and path probabilities with respect to parameters, and we show that as the smoothing parameter tends to zero, the differentiable semantics converges to the unsmoothed SOGA semantics under mild regularity conditions.

We implement DeGAS in PyTorch leveraging its AD capabilities to provide a sample-free gradient-based optimizer.  Numerical experiments show that, on thirteen benchmark programs, DeGAS achieves accuracy and runtime competitive with VI and MCMC. More important, we show DeGAS is able to solve instances of optimization problems for which sampling baselines fail to converge due to the presence of constructs such as conditional statements depending on continuous variables.


\paragraph{Structure of the paper.}
Section~\ref{sec:related} reviews related work. 
Section~\ref{sec:diff-semantics} presents the syntax and defines the differentiable semantics.
Section~\ref{sec:properties} states and proves our main properties (differentiability and convergence). 
Section~\ref{sec:experiments} reports the experimental results. Section~\ref{sec:conclusion} concludes.

\section{Related Work}
\label{sec:related}

To the best of our knowledge, this is the first work that computes gradients of probabilistic programs without sampling. For deterministic programs, the idea of smoothing to achieve sample-free optimization via gradient-free methods has been explored in~\cite{chaudhuri2010smooth,chaudhuri2011smoothing}, and successively adapted for automatic differentiation in~\cite{kreikemeyer2023smoothing}. However, as will be shown in Section~\ref{sec:properties}, the smoothing of deterministic programs does not trivially carry over to PPs.  

An attempt was made to extend the smoothing to PPs in Leios \cite{DBLP:conf/esop/LaurelM20}. Leios takes as input a discrete or hybrid discrete-continuous PP and approximates it with a fully continuous one using smoothing. The authors notice that the relaxation parameters must be tuned to avoid collapsing any probability event to zero. For example, a variable initially distributed as a Bernoulli after smoothing becomes distributed as a mixture of two components, with means in $0$ and $1$ and small standard deviations. This implies that the probability of the event $x==0$ passes from $0.5$ to $0$. To avoid this, Leios solves an optimization problem to replace all Booleans predicate depending on smoothed variables. Conceptually, we are close in spirit in that we tackle discontinuities by controlled smoothing. However, the goal is fundamentally different: similarly to the previously mentioned works, Leios aids sampling-based inference; instead, we enable sample-free optimization.

For PPs, most of the work has focused on coupling AD with sampling-based inference, designing gradient estimators for discontinuous programs—either to improve MCMC and VI globally \cite{DBLP:conf/nips/ZhangSSG12,DBLP:conf/nips/PakmanP13,DBLP:conf/nips/AfsharD15,DBLP:conf/icml/DinhBZM17,lee2018reparameterization} or only on the discontinuous parts \cite{DBLP:conf/aistats/ZhouGKRYW19,DBLP:conf/nips/KriekenTT21,DBLP:journals/pacmpl/LeeRY23}. 
These approaches still estimate gradients via sampling and are used to run inference faster or more stably. 
In contrast, we compute deterministic gradients by evaluating a differentiable program semantics; no Monte Carlo appears in the optimization loop. This lets us optimize likelihood and non-likelihood probabilistic objectives directly (such as reachability), not just expected values under a sampler.

Some methods start from an input program and compute a second program, whose expected return value is the derivative of the input program's expected return value~\cite{DBLP:conf/nips/AryaSSR22,DBLP:journals/pacmpl/LewHSM23}; 
the new program can then be sampled to obtain gradient estimations.
We differ in both goal (optimization rather than derivative-estimation for inference) and mechanism: for our smoothing, we leverage the analytical properties of the SOGA/GM semantics~\cite{DBLP:journals/pacmpl/RandoneBIT24}, that yield closed-form truncated moments and CDFs for every node, which we then backpropagate through analytically.

\section{Syntax and Semantics}\label{sec:diff-semantics}


SOGA was introduced in \cite{DBLP:journals/pacmpl/RandoneBIT24} as a semantics for a loop-free PPL that manipulates input variables distributed as GMs.  SOGA  makes the language closed under GMs, providing an analytical representation of the posterior that enables efficient computation of moments, cumulative distribution functions (CDFs), and probability density functions (PDFs).
This semantics is the second-order approximation of a family of approximate semantics based on Gaussian Mixtures, guaranteed to convergence to the true one, and demonstrated high accuracy across a variety of programs. Here, we add the possibility of specifying parameters to be used for optimization. 

\paragraph{Notation.} For a distribution $D$, let $\density{D}$ denote its pdf. 
A normal random variable with mean $\mu$ and covariance matrix $\Sigma$ is denoted by $\normal(\mu, \Sigma)$. A GM is a weighted sum of Gaussians, $\gm_{\pi, \mu, \Sigma} = \sum_{i=1}^c \pi_i \normal(\mu_i, \Sigma_i)$ where $c$ is the number of \textit{components} and $(\pi_i)_{\{i=1, \hdots, c\}}$ is the vector of $\textit{weights}$ satisfying $0 \le \pi_i \le 1$ and $\sum_{i=1}^c \pi_i = 1$. 
A Dirac delta centered on $x_0$ is denoted by $\dirac_{x_0}$.
Let $x[x_i \to e]$ denote the vector with $e$ as $i$-th component, and every other component equal to $x_j$. 
Let $\marg{y}{\dist}$ denote the marginal of distribution $\dist$ with respect to subvector $y$. 
We denote with $\idn$ the identity matrix of dimension $n$ and with $\charfunc_{A}$ the characteristic function of set $A$. 
We use $\otimes$ for the product of probability measures: if $\dist_1, \dist_2$ are distributions on $\R$, $\dist_1 \otimes \dist_2$ is a distribution on $\R^2$ whose marginal on the first coordinate is $\dist_1$ and the marginal on the second coordinate is $\dist_2$. 
For a predicate $\bexp$,  $\prob{\dist}{\bexp}$ denotes the probability of $\bexp$ under distribution $\dist$.
$\cond{\dist}{\bexp}$ denotes distribution $\dist$ conditioned to $\bexp$.


\subsection{Syntax}
\label{sec:syntax}

\paragraph*{Variables.} We consider programs defined over a vector of real variables $x = (x_1, \dots, x_n)$ and a set $\params$ of parameters, $\params = \{ \theta_i = d_i \}_{i = 1, \hdots, p}$ where $\theta_i$ is the name of the parameter and $d_i \in \R$ is an initial value. In addition, for each parameter an interval domain is specified.  
%
For instance, we can specify a standard deviation (std) as a parameter called $\sigma$, with initial value $1$, and interval domain $(0,+\infty)$.

\paragraph*{Expressions.} Expressions over program variables can either be products between two variables or linear expressions on program variables, where the coefficients of the variables or the zero-th order term can either be constants or parameters:
\begin{equation*}
	\aexps = \{ x_ix_j \mid i, j = 1 \dots n \} \cup \{ c_1x_1 + \hdots + c_nx_n + c_0 \mid c_i \in \mathbb{R} \cup \Theta \} \enspace.
\end{equation*}
\paragraph*{Boolean predicates.} Boolean predicates are \textit{true} or \textit{false}, or inequalities between a variable and a real number, which can either be a constant or a parameter:
\begin{equation*} 
	\bexps = \{true, false\} \cup \{ x_i \bowtie c \mid c \in \mathbb{R} \cup \Theta, \, \bowtie \in \{<, \le, ==, \ge, >\} \} 
\end{equation*}
Boolean predicates in the guards of conditional (if) statements are restricted to $\bexpsif$ with $\bowtie \in \{<, \le, \ge, > \}$.

The restrictions on expressions and Boolean predicates are inherited by SOGA and are motivated by the need to preserve closure of the Gaussian-mixture semantics under the supported operations. However, our grammar is expressive enough to encode general polynomial assignments and guards. 

\paragraph*{Language.} Our syntax uses the primitive $\stmtDist{x}{\gmstmt}$ for random assignments with GMs, where $\pi = (\pi_1, \hdots, \pi_c)$, $\mu = (\mu_1, \hdots, \mu_c)$, $\sigma = (\sigma_1, \hdots, \sigma_c)$ are vectors of weights, means and standard deviations, respectively, and $\pi_i , \mu_i, \sigma_i$
can either be real constants or real parameters, with the restriction that $\pi_i, \sigma_i \ge 0$ and $\sum_i \pi_i = 1$. The complete grammar of our programming language is:
\begin {align*}
\stmt ::= ~ & \stmtSkip \mid \stmtAsgn{x}{\aexp} \tag{deterministic assignments} \\
& \stmtDist{x}{\gmstmt} \tag{random assignment} \\
& \stmtIf{\bexp}{\stmt}{\stmt} \tag{test} \\
& \stmtObserve{\bexp} \tag{observe} \\
& \stmtSeq{\stmt}{\stmt}  \tag{sequential composition} 
\end {align*}
The $\stmtObserve{\bexp}$ construct represents a hard observe, which conditions the program’s distribution on the event specified by $\bexp$, effectively truncating the support of the resulting distribution to the states satisfying $\bexp$.

\subsection{Semantics}


We denote with $\exact{\cdot}$ the operator for the \emph{exact semantics} of the program,intended in the classic sense of Kozen's Semantics 2 \cite{kozen1979semantics}; instead, $\soga{\cdot}$ denotes the \emph{Second Order Gaussian Approximation (SOGA) semantics operator} as defined in \cite{DBLP:journals/pacmpl/RandoneBIT24} (both reported in Appedix \ref{app:soga} for completeness), which closes the language under GMs. The closure is realized by replacing every distribution that would not be a GM in the exact semantics (for instance, when conditioning due to a branch) with a GM. 
The substitution is performed by a dedicated operator called \emph{the moment-matching operator}, denoted by $\mmop$, which approximates a density function with a GM by matching its first two moments. 
In particular, when $\mmop$ acts on a mixture, it acts on every component; when it acts on a single-component mixture, it maps it to a Gaussian having mean and covariance matrix matching the original distribution, denoted by $\mu_\dist$, $\Sigma_\dist$. 
Formally:
\begin{equation}
	\mmop(\dist) = \begin{cases} \pi_1 \mmop(\dist_1) + \hdots + \pi_c \mmop(\dist_c) & \text{if } \dist=\sum_{i=1}^c \pi_i\dist_i \\ \normal(\mu_\dist, \Sigma_\dist) & \text{else.} \end{cases} \enspace .
\end{equation}
Notably, when acting on Gaussian mixtures, $\mmop$ leaves them unaltered~\cite{DBLP:journals/pacmpl/RandoneBIT24}.

\paragraph{Non-differentiability of SOGA semantics.}
By closing the semantics with respect to GMs, the general pdf of the distribution yielded by a program  dependent on parameters $\params$ can be shown to take the general form
$$ f_\params(x) = \sum_{i=1}^C \frac{\pi_i(\params)}{(2\pi)^{n/2}|\Sigma_i(\params)|^{1/2}}\exp\left(-\frac{1}{2}(x - \mu_i(\params))^T \Sigma_{i}^{-1}(\params)(x-\mu_i(\params))\right). $$
Assuming that $\pi(\params), \mu(\params)$ and $\Sigma(\params)$ are differentiable in $\params$, $f_\params$ can only be nondifferentiable in $\params$ if a covariance matrix is singular, i.e., $|\Sigma_i(\Theta)| = 0$.

\definecolor{codegray}{gray}{0.95}

\begin{figure}[t]
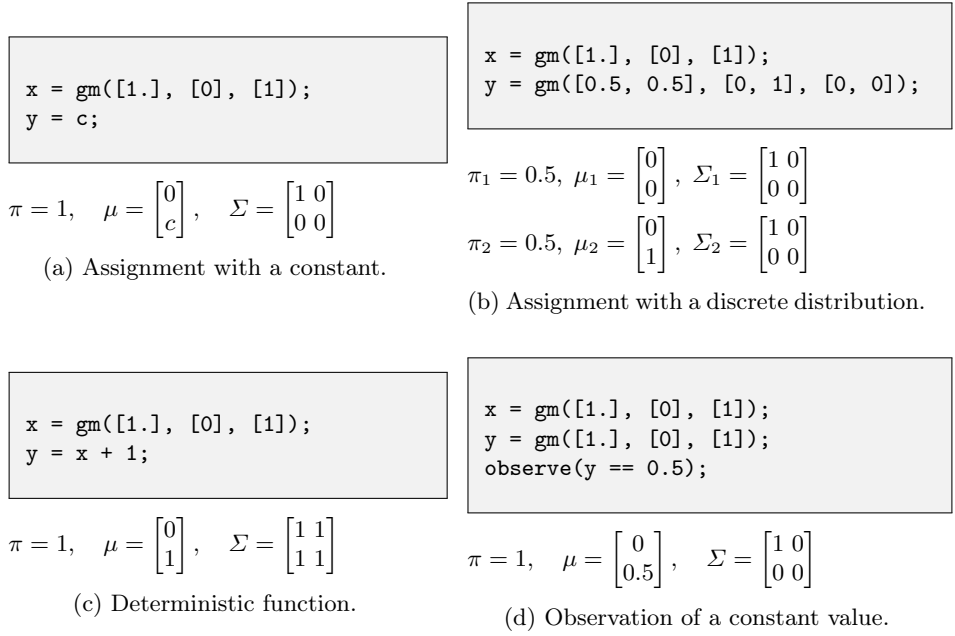

	\centering
	\begin{subfigure}[t]{0.45\textwidth}
		\centering
		\begin{minipage}{\linewidth}
			\vspace{0.5em}
			\fcolorbox{black}{codegray}{
				\parbox{\linewidth}{
					\texttt{
						\begin{tabbing}
							x = gm([1.], [0], [1]); \\
							y = c;
						\end{tabbing}
					}
			}}
			\vspace{0.5em}
			
			$\pi = 1, \quad
			\mu = 
			\begin{bmatrix}
				0 \\ c
			\end{bmatrix}, \quad
			\Sigma = 
			\begin{bmatrix}
				1 & 0 \\
				0 & 0
			\end{bmatrix}$
		\end{minipage}
		\caption{Assignment with a constant.}
		\label{fig:snippets-a}
	\end{subfigure}
	\hfill
	\begin{subfigure}[t]{0.5\textwidth}
		\centering
		\begin{minipage}{\linewidth}
			\vspace{0.5em}
			\fcolorbox{black}{codegray}{
				\parbox{\linewidth}{
					\texttt{
						\begin{tabbing}
							x = gm([1.], [0], [1]); \\
							y = gm([0.5, 0.5], [0, 1], [0, 0]);
						\end{tabbing}
					}
			}}
			\vspace{0.5em}
			
			$\pi_1 = 0.5, \;
			\mu_1 = 
			\begin{bmatrix}
				0 \\ 0
			\end{bmatrix}, \;
			\Sigma_1 = 
			\begin{bmatrix}
				1 & 0 \\ 0 & 0
			\end{bmatrix}$
			
			\vspace{0.3em}
			
			$\pi_2 = 0.5, \;
			\mu_2 = 
			\begin{bmatrix}
				0 \\ 1
			\end{bmatrix}, \;
			\Sigma_2 = 
			\begin{bmatrix}
				1 & 0 \\ 0 & 0
			\end{bmatrix}$
		\end{minipage}
		\caption{Assignment with a discrete distribution.}
		\label{fig:snippets-b}
	\end{subfigure}
	
	\vspace{1em}
	
	\begin{subfigure}[t]{0.45\textwidth}
		\centering
		\begin{minipage}{\linewidth}
			\vspace{0.5em}
			\fcolorbox{black}{codegray}{
				\parbox{\linewidth}{
					\texttt{
						\begin{tabbing}
							x = gm([1.], [0], [1]); \\
							y = x + 1;
						\end{tabbing}
					}
			}}
			\vspace{0.5em}
			
			$\pi = 1, \quad
			\mu = 
			\begin{bmatrix}
				0 \\ 1
			\end{bmatrix}, \quad
			\Sigma = 
			\begin{bmatrix}
				1 & 1 \\ 1 & 1
			\end{bmatrix}$
		\end{minipage}
		\caption{Deterministic function.}
		\label{fig:snippets-c}
	\end{subfigure}
	\hfill
	\begin{subfigure}[t]{0.5\textwidth}
		\centering
		\begin{minipage}{\linewidth}
			\vspace{0.5em}
			\fcolorbox{black}{codegray}{
				\parbox{\linewidth}{
					\texttt{
						\begin{tabbing}
							x = gm([1.], [0], [1]); \\
							y = gm([1.], [0], [1]); \\
							observe(y == 0.5);
						\end{tabbing}
					}
			}}
			\vspace{0.5em}
			
			$\pi = 1, \quad
			\mu = 
			\begin{bmatrix}
				0 \\ 0.5
			\end{bmatrix}, \quad
			\Sigma = 
			\begin{bmatrix}
				1 & 0 \\ 0 & 0
			\end{bmatrix}$
		\end{minipage}
		\caption{Observation of a constant value.}
		\label{fig:snippets-d}
	\end{subfigure}
	
	\caption{Nondifferentiability of SOGA semantics. Examples  yielding singular covariance matrices.}
	\label{fig:snippets}
\end{figure}

Figure \ref{fig:snippets} shows four minimal examples in which this may occur. 
In Figure \ref{fig:snippets-a}, $y$ is assigned a constant value $c$, setting its variance to $0$; in Figure \ref{fig:snippets-b}, $y$ is assigned with a discrete (Bernoulli) distribution, yielding again $0$ variance;
in Figure \ref{fig:snippets-c}, $y$ is assigned a deterministic function of $x$, so the distribution of $y$ is completely determined by that of $x$; reflected in the singularity of the covariance matrix.
In Figure \ref{fig:snippets-d}, $y$ is observed to take on a real value, having an effect analogous to a constant assignment.

In order to deal with such cases consistently, we smooth the semantics. Smoothing will depend on a hyper-parameter $\smoothEps > 0$ that we assume fixed for the rest of this section. Intuitively, $\smoothEps$ represents the amount of smoothing applied to the program and corresponds to the standard deviation of the Gaussian perturbation applied to smoothed variables. 

Next, we discuss how we smooth Boolean predicates and the whole semantics separately. Throughout the section, we let $\varset$ denote the set of program variables that have been smoothed. 




\paragraph{Smoothing operator for Boolean predicates.}

We define the following smoothing operator for Boolean predicates, that takes as input a predicate $\bexp$ and a set of smoothed variables $\varset$ and returns a new predicate.

\begin{equation} \label{eq:smoothOpBool}
	\smoothOpBool(x \bowtie c, \varset) =
	\begin{cases}
		x > c + \smoothDelta & \text{if } \bowtie \text{ is } >, x \in \varset \\
		x > c - \smoothDelta & \text{if } \bowtie \text{ is } \geq, x \in \varset \\
		x < c + \smoothDelta & \text{if } \bowtie \text{ is } \leq, x \in \varset \\
		x < c - \smoothDelta & \text{if } \bowtie \text{ is } <, x \in \varset \\
		(x > c - \smoothDelta) \wedge (x < c + \smoothDelta)
		& \text{if } \bowtie \text{ is } == t, x \in \varset \\
		x \bowtie c & \text{else}
	\end{cases}
\end{equation}
For now we allow $\smoothDelta$ to vary with $\smoothEps$ under the only assumption that $\lim_{\smoothEps \to 0} \smoothDelta = 0$. In the next section we will introduce an additional restriction to guarantee convergence to the SOGA semantics. 

\paragraph{DeGAS.}
As in \cite{DBLP:journals/pacmpl/RandoneBIT24}, we introduce the semantics in terms of the program's control flow graph (cfg). 
%
Each node in the graph represents a program instruction. 
We consider six node types: $\entrytype$, $\detstatetype$, $\rndstatetype$, $\testtype$, $\observetype$ and $\exittype$ and use the notation $\node \colon \textit{type}$ to denote that $\node$ is of type $\textit{type}$. 
Nodes of type $\detstatetype$ are labelled by $\stmtSkip$ or by a deterministic assignment $\stmtAsgn{\xi}{\aexp}$. Nodes of type $\rndstatetype$ are labelled by a random assignment $\stmtDist{\xi}{\gmstmt}$. Nodes of type $\testtype$ are labelled by a Boolean predicate in $\bexpsif$, while $\observetype$ node are labelled by a Boolean predicate in $\bexps$. A function $\argfunc$ is defined on all labelled nodes $\node$, returning the label of $\node$. In addition, a function $\condfunc$ is defined on the set of $\detstatetype$ and $\rndstatetype$ nodes, returning either \textit{true, false} or \textit{none} and such that $\condfunc(\node) = \textit{none}$ if and only if the parent of $\node$ is not a $\testtype$ node. Arithmetic expressions and Boolean predicates are interpreted in the standard way. An expression $\aexp \in \aexps$ denotes a mapping $e \colon \Rn \to \R$. A Boolean predicate $\bexp \in \bexps$ denotes the set of vectors $x \in \Rn$ that satisfy the predicate.

For a program $\stmt$ we let $\pathset{\stmt}$ denote the set of paths in its cfg. 
For each path $\path \in \pathset{\stmt}$ the function $\succfunc(\node)$ associates $\node \in \path$ with its succcessor in $\path$.
The semantics of path $\path$, denoted by $\degas{\path}$ is defined as a function taking no input and returning a pair $(\p, \dist)$  where $\p \in \Rplus$ and $\dist$ is a GM on $\Rn$. The semantics of a path $\path = \node_0..\node_n$ is defined as the composition of the semantics of each node in the path: $ \degas{\path} = \degas{v_n}_\path \circ \hdots \circ \degas{v_0}_\path.$

As noted earlier, when computing the semantics of a node, in addition to $\p$ and $\dist$, we need to keep track of the smoothed variables, so we augment the semantics with a set $\varset$, containing the variables that would be discrete in the exact semantics.
The semantics of a node $\node$ is a function taking as input a triple $(\p, \dist, \varset)$, and returning a triple  $(\p', \dist', \varset')$ of the same type. 
The only exceptions are the $\entrytype$ node which takes no input and the $\exittype$ node which just outputs the pair $(\p, \dist)$ The semantics of each node type is defined separately.



\begin{itemize}
	\item If $\node \colon \entrytype$, then  $\degas{\node}_\path = (1, \normal(0, \mathbb{I}_n), \emptyset).$
	\item If $\node \colon \detstatetype$, let $\dist'$ denote the distribution of $x[\xi \to \aexp]$ and $\dist^\smoothEps$ denote the distribution of $x[\xi \to \aexp + z]$, where $z$ is a fresh program variable with distribution $\normal(0, \smoothEps)$. Let $\varset'$ be $\varset \cup \{\xi\}$ if all variables in $\aexp$ are in $\varset$ and $\varset' = \varset$ else. Then:
	$$ \degas{\node}_{\path}(\p, \dist, \varset) = \begin{cases} 
		(\p, \dist, \varset) & \argfunc(\node) = \stmtSkip \\ 
		(\p, \dist^\smoothEps, \varset \cup \{ \xi \} ) & \argfunc(\node) = \stmtAsgn{\xi}{c} \\
		(\p, \dist^\smoothEps, \varset')  & \argfunc(\node) = \stmtAsgn{\xi}{\aexp}, \aexp \text{ linear}, \xi \text{ not in } e \\
		(\p, \mmop(\dist'), \varset') & \text{else}. \end{cases}$$ 
	\item If $\node \colon \rndstatetype$ and $\argfunc(\node) = \stmtDist{\xi}{\gmstmt}$, let $\sigma'$ be the vector with $\sigma'_j = \sigma_j$ if $\sigma_j \neq 0$ and $\sigma'_j = \smoothEps$ else. Then:
	$$\degas{\node}_\path(\p, \dist, \varset) = \begin{cases} (\p, \marg{x \setminus \xi}{D} \otimes \gm_{\pi, \mu, \sigma'}, \varset \cup \{\xi\}) & \sigma_j = 0 \text{ for some } j \\
		(\p, \marg{x \setminus \xi}{D} \otimes \gm_{\pi, \mu, \sigma}, \varset) & \text{else}. \end{cases}$$
	\item If $ \node \colon \testtype$, letting $\argfunc(\node) = \bexp$ and $\bexp' = \smoothOpBool(\bexp, \varset)$:
	$$ \degas{\node}_\path(\p, \dist, \varset) = \begin{cases} 
		(\p \cdot \prob{\dist}{\bexp'}, \mmop(\cond{\dist}{\bexp'}), \varset) & \condfunc(\succfunc(\node)) = \textit{true}\\
		(\p \cdot \prob{\dist}{\neg \bexp'}, \mmop(\cond{\dist}{\neg \bexp'}), \varset) & \condfunc(\succfunc(\node)) = \textit{false}. \end{cases} $$
	\item If $ \node \colon \observetype$, letting $\argfunc(\node) = \bexp$, $I = \int_{\R^{n-1}} \density{\dist}(x[\xi \to c]) d(x\setminus \xi)$ and $\bexp' = \smoothOpBool(\bexp, \varset)$:
	$$ \degas{\node}_\path(\p, \dist, \varset) = \begin{cases}
		(\p \cdot I, \marg{x\setminus\xi}{\cond{\dist}{\bexp}} \otimes \normal(c, \smoothEps), \varset \cup \{ \xi \}) & \\
		& \hspace{-1cm} \bexp \text{ is } \xi == c, \xi \not \in \varset \\
		(\p \cdot \prob{\dist}{\bexp'}, \mmop(\cond{\dist}{\bexp'}), \varset) & \hspace{-1cm} \text{else} 
	\end{cases} $$
	\item If $\node \colon \exittype$, $\degas{\node}_\path(\p, \dist, \varset) = (\p, \dist)$.
\end{itemize}

A program $\stmt$ is called \textit{valid} if for at least a path $\path \in \pathset{\stmt}$ $\exact{\path} = (\p, \dist)$ with $\p > 0$. The semantics of a valid program $\stmt$ is then defined as a mixture of the semantics of its path:
\begin{equation*}%
	\degas{\stmt} = \sum_{\substack{\path \in \pathset{\stmt} \\  \degas{\path} = (\p, \dist)}} \frac{\p \cdot \dist}{\sum_{\substack{\path' \in \pathset{\stmt} \\ \degas{\path'} = (\p', \dist')}} \p'}.
\end{equation*}

Let us go back to Figure \ref{fig:snippets}, to briefly discuss how $\degas{\cdot}$ avoids the discontinuities.
For the program in Figure \ref{fig:snippets-a}, $\degas{\cdot}$ effectively assigns $y$ with $c + z$, where $z$ is a freshly defined variable, Gaussianly distributed with mean $0$ and std $\smoothEps$. This is equivalent to assigning $y$ with $\normal(c, \smoothEps)$, and yields a non-degenerate covariance matrix $\Sigma = \begin{bmatrix} 1 & 0 \\ 0 & \smoothEps^2 \end{bmatrix}$.
For the program in Figure \ref{fig:snippets-b}, $\degas{\cdot}$ replaces components with zero std with non-degenerate components with std $\smoothEps$. Therefore the assignment effectivaly performed would be $y = gm([0.5, 0.5], [0, 1], [\smoothEps, \smoothEps])$, yielding $\Sigma_1 = \Sigma_2 = \begin{bmatrix} 1 & 0 \\ 0 & \smoothEps^2 \end{bmatrix}$.
For the program in Figure \ref{fig:snippets-c}, $\degas{\cdot}$ acts similarly as in the first case, $y$ is a assigned with $x + 1 + z$, with $z$ fresh Gaussian variable with mean $0$ and std $\smoothEps$. The covariance matrix then becomes $\Sigma = \begin{bmatrix} 1 & 1 \\ 1 & 1 + \smoothEps^2 \end{bmatrix}$.
Finally, for the program in Figure \ref{fig:snippets-d}, the variance of $y$ after the observation is corrected and put equal to $\smoothEps$, yielding $\Sigma = \begin{bmatrix} 1 & 0 \\ 0 & \smoothEps^2 \end{bmatrix}$.

%
%
%
To avoid unwanted changes in the behaviour of the probability mass due to the smoothing, $\degas{\cdot}$ uses the operator $\smoothOpBool$ to interpret differently Boolean predicates when they involve smoothed variables, as reflected in the definition of the semantics of $\testtype$ and $\observetype$ nodes. 
This has the effect of avoiding conditioning to zero probability events, since $\degas{\cdot}$ always acts on non-degenerate distributions and $\smoothOpBool(\bexp, \varset)$ always returns open sets. 

In the next section, we will prove that the semantics computed by $\degas{\cdot}$ is indeed differentiable in the program parameters and that by suitably choosing the value of $\smoothDelta$ when applying $\smoothOpBool$ $\degas{\cdot}$ will converge to to $\soga{\cdot}$.

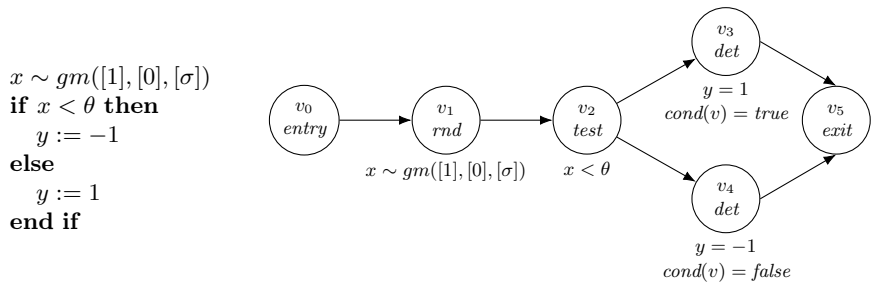
\begin{figure}[htbp]
\centering
\begin{minipage}{0.3\linewidth}
\begin{algorithmic}
    \STATE $x \sim gm([1], [0], [\sigma])$
    \IF{$x < \theta$}
        \STATE $y := -1$
    \ELSE
        \STATE $y := 1$
    \ENDIF
\end{algorithmic}
\end{minipage}
\begin{minipage}{0.6\linewidth}
\centering
\begin{tikzpicture}[node distance=1.8cm, scale=0.8, every node/.style={transform shape}]
    \tikzset{
        circnode/.style={
            draw, circle, minimum width=1cm, minimum height=1cm, inner sep=0pt, 
            text width=1cm, align=center, font=\small
        },
        arrow/.style={-{Latex}}
    }

    \node[circnode] (entry) {$\node_0$ \\ $\entrytype$};

    \node[circnode, right=1.2cm of entry, label=below:{\shortstack{$x \sim gm([1], [0], [\sigma])$}}] 
        (rnd) {$\node_1$ \\ $\rndstatetype$};

    \node[circnode, right=1.2cm of rnd, label=below:{$x < \theta$}] 
        (test) {$\node_2$ \\ $\testtype$};

    \node[circnode, below right=0.5cm and 1.5cm of test, 
          label=below:{\shortstack{$y=-1$ \\ $\condfunc(\node)=\textit{false}$}}] 
          (det2) {$\node_4$ \\ $\detstatetype$};

    \node[circnode, above right=0.5cm and 1.5cm of test, 
          label=below:{\shortstack{$y=1$ \\ $\condfunc(\node)=\textit{true}$}}] 
          (det1) {$\node_3$ \\ $\detstatetype$};

    \node[circnode, right=3cm of test] 
          (exit) {$\node_5$ \\ $\exittype$};

    \draw[arrow] (entry) -- (rnd);
    \draw[arrow] (rnd) -- (test);
    \draw[arrow] (test) -- (det1);
    \draw[arrow] (test) -- (det2);
    \draw[arrow] (det1.east) -- (exit.north);
    \draw[arrow] (det2.east) -- (exit.south);
\end{tikzpicture}
\end{minipage}
\caption{Example program $P$ (left) and its cfg (right).}\label{fig:cfg}
\end{figure}

\paragraph{Example 1.}
Consider the program $P$ and its cfg in Fig. \ref{fig:cfg}, where $\theta$ and $\sigma$ are parameters with $\theta \in (-\infty, +\infty)$ and $\sigma \in (0, +\infty)$. 
The exact semantics is:
\[
    \exact{\stmt} = \Phi(\theta/\sigma) \cdot (\cond{\normal(0, \sigma)}{x < \theta} \otimes \dirac_{-1}) 
    + (1-\Phi(\theta/\sigma)) \cdot (\cond{\normal(0, \sigma)}{x \ge \theta} \otimes \dirac_{1})
\]
In particular, the exact marginal over $y$ is a discrete distribution putting
$\Phi(\theta/\sigma)$ probability mass on $-1$ and $1-\Phi(\theta/\sigma)$ on $1$.
Therefore, the exact marginal over $y$ is nondifferentiable (in fact, discontinuous) in the parameters.

Using the SOGA semantics, closed with respect to GMs, yields:
\[
  \soga{\stmt} = \Phi(\theta/\sigma) \cdot \mmop(\cond{\normal(0, \sigma)}{x < \theta} \otimes \dirac_{-1})
    + (1-\Phi(\theta/\sigma)) \cdot \mmop(\cond{\normal(0, \sigma)}{x \ge \theta} \otimes \dirac_{1})
\]
However, the marginal over $y$ remains unchanged, and therefore still nondifferentiable.
When computing the differentiable semantics, not only is the moment-matching operator applied to close the semantics with respect to GMs, degeneracies are avoided by smoothing the assignments of $y$.
This yields:
\begin{align*}
  \degas{\stmt} &= \Phi(\theta/\sigma) \cdot \mmop(\cond{\normal(0, \sigma)}{x < \theta} \otimes \normal(-1, \smoothEps)) + \\
  & \hspace{4cm} + (1-\Phi(\theta/\sigma)) \cdot \mmop(\cond{\normal(0, \sigma)}{x \ge \theta} \otimes \normal(1, \smoothEps))
\end{align*}
For any $\sigma \in (0,+\infty)$, the differentiable marginal over $y$ is given by
\[
  \Phi(\theta/\sigma) \cdot \normal(-1, \smoothEps) + (1-\Phi(\theta/\sigma)) \cdot \normal(1, \smoothEps).
\]

\begin{figure}
    \centering
    \includegraphics[width=1.0\linewidth]{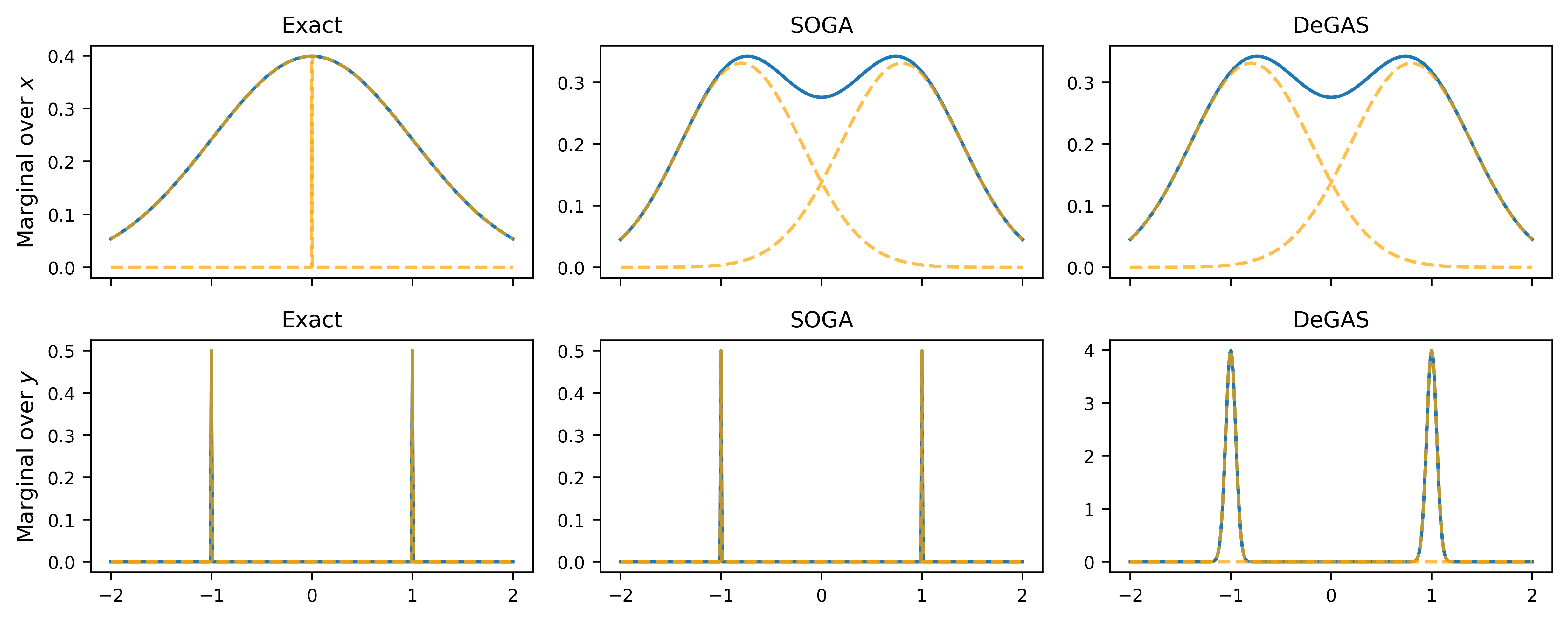}
    \caption{Marginal posterior densities of program $P$ under different semantics. The orange lines highlight the different components of the mixtures.}
    \label{fig:densities}
\end{figure}
The marginal densities of the three semantics, computed for $\sigma = 1$ and $\theta=0$ are represented in Fig. \ref{fig:densities}, where the densities of the components of the mixtures are highlighted in orange and for DeGAS we fixed $\smoothEps=0.05$.

\section{Properties of $\degas{\cdot}$}
\label{sec:properties}

In this section we prove two properties of the DeGAS semantics operator $\degas{\cdot}$, namely differentiability of $\degas{\stmt}$ with respect to the parameters and convergence of $\degas{\stmt}$ to $\soga{\stmt}$ as $\smoothEps \to 0$.

\paragraph*{Differentiability.} First, we focus on the differentiability of the DeGAS semantics with respect to $\params$. 
For this reason, in this subsection we make the dependence of the programs on the parameters explicit, using the notation $\stmt(\params)$.
For instance, program $\stmt$ from Example 1 would be denoted as $\stmt(\theta, \sigma)$.
We assume that the parameters come with assigned initial values and interval domains and that they always take values in the interval domains.
For instance, for $\stmt(\theta, \sigma)$ we could specify the initial values $d_\theta = 0$ and $d_\sigma = 1$ and the interval domains $\theta \in (-\infty, +\infty)$, $\sigma \in (0, +\infty)$.  
We say that a distribution $\dist(\params)$ is differentiable in $\params$ if its pdf is. For example, if $\dist(\params)$ is a GM, $\dist(\params)$ is differentiable in $\params$ if, for $i=1, \hdots, c$ (i) $\pi_i(\params)$, $\mu_i(\params)$ and $\Sigma_i(\params)$ are differentiable in $\params$; (ii) $\Sigma_i(\params)$ is non-singular for every value of $\params$ that satisfies the constraints. 

\begin{theorem} \label{thm:differentiability}
	For any valid program $\stmt(\params)$, $\degas{\stmt(\params)}$ returns $(\p'(\params), \dist'(\params), \varset')$ such that $\p'(\params)$ and $\dist'(\params)$ are differentiable in $\params$.
\end{theorem}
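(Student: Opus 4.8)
The plan is to prove the statement by \emph{structural induction along the control-flow graph}, exploiting the compositional definition of $\degas{\cdot}$. By the mixture formula for $\degas{\stmt}$, it suffices to show that for every path $\path \in \pathset{\stmt}$ the semantics $\degas{\path} = (\p(\params), \dist(\params))$ has differentiable probability $\p(\params)$ and differentiable distribution $\dist(\params)$, and that the normalizing denominator $\sum_{\path'} \p'(\params)$ is differentiable and strictly positive; a finite weighted sum of differentiable GMs divided by a positive differentiable scalar is again differentiable. Throughout I would use the criterion stated before the theorem: a GM is differentiable in $\params$ provided its weights $\pi_i(\params)$, means $\mu_i(\params)$ and covariances $\Sigma_i(\params)$ are differentiable \emph{and} every $\Sigma_i(\params)$ is non-singular on the parameter domain. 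Hence the induction must carry three invariants along each path: (i) differentiability of all weights, means, covariance entries and of $\p$; (ii) positive-definiteness of every component covariance; and (iii) strict positivity of every conditioning factor $\prob{\dist}{\bexp'}$ and $I$ encountered (this last point is what keeps the denominator positive and the reweightings differentiable).

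For the base case, the $\entrytype$ node returns $(1, \normal(0,\idn), \emptyset)$, which satisfies all invariants. For the inductive step I would check that each node operator preserves them. The easy cases are those where smoothing directly injects positive variance: for $\symSkip$ the triple is unchanged; for a constant or admissible-linear assignment the semantics returns $\dist^\smoothEps$, i.e.\ replaces $\xi$ by $\aexp + z$ with $z$ fresh Gaussian noise of variance $\smoothEps^2>0$, so the $\xi$-coordinate becomes independent of the rest with strictly positive variance and the covariance is block-diagonal with a positive-definite remaining block (a principal submatrix of a positive-definite matrix is positive-definite); for a $\rndstatetype$ node the output $\marg{x\setminus\xi}{\dist}\otimes\gm_{\pi,\mu,\sigma'}$ is a GM whose components are block-diagonal, the first block a marginal (a positive-definite principal submatrix) and the second block having variances $(\sigma'_j)^2>0$ because zero standard deviations are replaced by $\smoothEps$. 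In all these cases the new means and covariance entries are smooth (affine, polynomial, or product-measure) functions of the previous ones, so differentiability is preserved and $\p$ is untouched.

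The substantive cases are the conditioning nodes and the nonlinear assignment. For $\testtype$ and the ``else'' branch of $\observetype$, the predicate $\bexp' = \smoothOpBool(\bexp,\varset)$ always defines an \emph{open, full-dimensional} region (a half-space, its closed complement, or the slab $(x>c-\smoothDelta)\wedge(x<c+\smoothDelta)$ for the smoothed equality case). Because the incoming distribution is a non-degenerate GM with full support on $\Rn$, the probability $\prob{\dist}{\bexp'}$ is a sum of terms $\pi_i\,\Phi(\cdot)$, hence strictly positive and smooth in $\params$; the conditioned distribution still has full-dimensional support, so each moment-matched component $\mmop(\cond{\cdot}{\bexp'})$ has a positive-definite covariance, and the renormalized weights $\pi_i\prob{N_i}{\bexp'}/\prob{\dist}{\bexp'}$ are differentiable (ratios with positive denominator). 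The truncated first and second moments are the standard closed forms in $\phi,\Phi$, which are smooth. The soft-observe case $\xi==c$, $\xi\notin\varset$ is analogous: the factor $I$ is the marginal density of $\xi$ at $c$, strictly positive and smooth, and the output sets $\xi\sim\normal(c,\smoothEps)$ (variance $\smoothEps^2>0$) independently of the rest, whose block is a conditional marginal, i.e.\ a Schur complement of a positive-definite matrix, hence positive-definite. Finally, for the nonlinear assignment $\xi:=\xj\xk$ (the genuine ``else'' of $\detstatetype$), $\mmop(\dist')$ matches the first two moments of the pushforward; the key point is that under a non-degenerate Gaussian the monomial $\xj\xk$ is not almost surely a linear combination of $\{1,x_1,\dots,x_n\}\setminus\{\xi\}$, so the moment-matched covariance stays non-singular, its entries being polynomials in Gaussian moments and hence differentiable.

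I expect the \textbf{main obstacle} to be invariant (ii) in exactly these last cases: proving that moment-matching of a truncated, or of a nonlinearly transformed, non-degenerate Gaussian cannot produce a singular covariance. The half-space/slab argument requires showing that conditioning a positive-definite Gaussian to a full-dimensional open set yields a distribution whose covariance is still positive-definite (no direction collapses), and the product-assignment argument requires the $L^2$-linear independence of $\xj\xk$ from the lower-degree monomials under a non-degenerate Gaussian measure. Once these two positive-definiteness facts are in hand, closure of differentiability under the remaining smooth operations and under the mixture/normalization step is routine: validity of $\stmt$ ensures $\pathset{\stmt}\neq\emptyset$, and since every DeGAS path factor is strictly positive, the denominator $\sum_{\path'}\p'(\params)$ is a finite sum of differentiable positive terms, hence differentiable and bounded away from zero, so the quotient defining $\degas{\stmt}$ is differentiable in $\params$.
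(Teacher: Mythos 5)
Your proposal follows essentially the same route as the paper's proof: a node-by-node induction along each path, carrying differentiability of $\p$ and of the component weights, means and covariances together with non-singularity of every component covariance, with the same case split and the same use of the injected smoothing noise, the closed-form truncated moments, and Isserlis's theorem for the product assignment. The two positive-definiteness facts you flag as the main obstacle are exactly where the paper's proof does its real work: the truncation case is dispatched by observing that conditioning a non-degenerate Gaussian on a set of positive measure cannot collapse any direction, and the product-assignment case is proved by contradiction --- a singular moment-matched covariance would force a deterministic affine relation involving $\xj\xk$, hence make $\xj$ a non-affine (rational) function of $\xk$, contradicting the fact that conditional expectations of a non-degenerate Gaussian vector are affine.
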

This is the core theoretical result of the paper (proved in the Appendix), because it allows for a safe application of gradient-based optimization methods to PPs.

\paragraph*{Convergence.} DeGAS adds an additional layer of approximation, through the Gaussian perturbation, to the already approximated semantics of SOGA $\soga{\cdot}$.  We prove that this additional approximation tends to disappear when the perturbation becomes very small, i.e. in the limit for $\smoothEps \to 0$. Coupled with SOGA's convergence result to the true semantics $\exact{\cdot}$, this offers a principled way of obtaining sample-free gradients for a class of PPs. 

DeGAS' convergence proof (provided in the Appendix) is surprisingly subtle and deserves an intuitive explanation. When smoothing the predicates, the dependence of $\smoothDelta$ on $\smoothEps$ can jeopardize the convergence. The following example motivates the introduction of an additional requirement, called \emph{consistency}, that is needed to guarantee convergence.

\begin{example}
	We consider two example programs:
	\begin{align*} 
		\stmt_1 \colon & \stmtAsgn{x}{0}; \stmtDist{y}{gm([1.], [0.], [1.])}; \stmtObserve{x \ge 0}, \\
	 	\stmt_2 \colon & \stmtAsgn{x}{0}; \stmtDist{y}{gm([1.], [0.], [1.])}; \stmtObserve{x > 0}.
	 \end{align*}
	 In SOGA we have $ \soga{\stmt_1} = (1, \delta_0 \times \normal(0,1))$ and $\soga{\stmt_2} = (0, \delta_0 \times \normal(0,1))$.
	 In DeGAS, before the observe statement for both programs we have the output triple $(1, \deps, \{x\})$ where $\deps = \normal(0, \smoothEps) \times \normal(0, 1)$.
	 However, the predicates of the observe node are smoothed differently:
	 $$ \smoothOpBool(x \ge 0, \{x\}) = x > -\smoothDelta, \qquad \smoothOpBool(x > 0, \{x\}) = x > \smoothDelta.$$
	 Consider $\stmt_1$. 
	 To compute the differentiable semantics of the observe node we need to compute $\prob{\deps}{x > -\smoothDelta}$ and the moments of $\cond{(\deps)}{x > \smoothDelta}$, and we want them to tend to $1$ and to the moments of $\delta_0 \times \normal(0,1)$, respectively. 
	 Here, we focus on the probabilities, since the convergence of the moments are proved using similar arguments.
	 Letting $\Phi$ denote the cdf of the standard Gaussian, then $ \prob{\deps}{x > - \smoothDelta} = 1 - \Phi\left(-\frac{\smoothDelta}{\smoothEps}\right).$
	 For this quantity to tend to 1 we need $\frac{\smoothDelta}{\smoothEps} \to \infty$.
	 Intuitively, this limit forces $\smoothDelta$ to decrease slower than the variance of $x$ (in this case $\smoothEps$), so that as $\smoothEps$ tends to $0$ more and more of the probability mass of $\deps$ is inside $\{ x > -\smoothDelta \}$. 
	 Moreover, it is clear that if $\smoothDelta$ is a linear function of $\smoothEps$ or if it decreases faster than it, the limit cannot hold.
	  
	 Analogously consider $\stmt_2$ for which $\prob{\deps}{x > \smoothDelta} = 1 - \Phi(\frac{\smoothDelta}{\smoothEps})$. For it to tend to $0$ we need again $\frac{\smoothDelta}{\smoothEps} \to \infty$. This time the slower convergence of $\smoothDelta$ to $0$ allows a larger part of the probability mass of $\deps$ to remain outside $\{ x > \smoothDelta \}$.
\end{example}
	 
The previous example leads us to the following definition and theorem (we report the full proof in the Appendix).

\begin{definition}
	The operator $\smoothOpBool$ is applied \emph{consistently} if for every $\testtype$ and $\observetype$ node with $\argfunc(\node) = \xi \bowtie c$ when $\smoothOpBool$ is applied to compute $\degas{\node}(\p, \dist, \varset)$ and $\xi \in \varset$, $\delta$ is chosen such that:
	$$ \lim_{\smoothEps \to 0} \smoothDelta(\epsilon) = 0 \qquad \text{ and } \qquad \lim_{\smoothEps \to 0} \frac{\smoothDelta(\smoothEps)}{\sqrt{\Sigma_{\xi, \xi}(\epsilon)}} = +\infty $$
	for every $\Sigma(\smoothEps)$ covariance matrix of a component of $\dist$. 
	The dependence of $\Sigma$ from $\smoothEps$ is guaranteed by the fact that $\xi \in \varset$.  
\end{definition}

\begin{theorem}
	Suppose $\stmt$ is a valid program and $\smoothOpBool$ is applied consinstently when computing $\degas{\stmt}$. Then $\lim_{\smoothEps \to 0} \degas{\stmt} = \soga{\stmt}$ (where convergence for the distributions is intended in the sense of \emph{convergence in distribution}).
\end{theorem}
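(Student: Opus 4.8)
The plan is to exploit that both $\degas{\stmt}$ and $\soga{\stmt}$ are finite, normalized mixtures indexed by the paths $\path \in \pathset{\stmt}$ of the finite cfg, and that along each path the semantics is a composition of node operators. First I would reduce the theorem to a path-wise claim: if for every $\path$ the DeGAS output $\degas{\path} = (\peps, \deps)$ satisfies $\peps \to \p$ and $\deps \to \dist$ in distribution, where $\soga{\path} = (\p, \dist)$, then the two normalized program-level mixtures converge in distribution. This reduction is safe because every DeGAS path has strictly positive weight --- smoothed predicates define open events of positive probability under the non-degenerate $\deps$, and the likelihood integrals $I$ are positive --- while the limiting denominator $\sum_{\path'} \p'$ is positive by validity and well-definedness of $\soga{\stmt}$; division by a convergent positive denominator is continuous, and paths whose SOGA weight vanishes drop out of the limit, matching the behaviour of $\soga{\cdot}$.

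For the path-wise claim I would induct on the node prefix of $\path$, strengthening the target to a parameter-level invariant rather than mere convergence in distribution. After $k$ nodes, the DeGAS triple $(\peps, \deps, \varset)$ and SOGA pair $(\p, \dist)$ should satisfy: (a) $\deps$ and $\dist$ have the same number of components, with weights, means and covariances converging component-wise as $\smoothEps \to 0$; (b) $\peps \to \p$; (c) $\varset$ equals exactly the set of coordinates that are degenerate in $\dist$; and (d) for each $\xi \in \varset$, $\Sigma_{\xi\xi}(\smoothEps) \to 0$ (a consequence of (a) and (c) that I record because its rate governs the crux). The base (entry) case is immediate. At $\detstatetype$ and $\rndstatetype$ nodes the only discrepancy with SOGA is the injected $\normal(0,\smoothEps)$ perturbation, or the replacement of a zero standard deviation by $\smoothEps$; since $\normal(c,\smoothEps) \to \dirac_c$ and since affine push-forwards and the moment-matching operator $\mmop$ are continuous in the GM parameters and preserve covariances, the invariant propagates, the injected $\smoothEps^2 \to 0$ establishing (d) for newly smoothed coordinates and linear combinations of vanishing-variance coordinates maintaining it.

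The crux is the inductive step at $\testtype$ and $\observetype$ nodes, where $\bexp' = \smoothOpBool(\bexp, \varset)$ enters. If the tested variable $\xi \notin \varset$, then $\bexp' = \bexp$ and its variance is bounded away from $0$, so $\prob{\deps}{\bexp} \to \prob{\dist}{\bexp}$ and the truncated conditional moments converge by continuity of the Gaussian CDF and of truncated-Gaussian moments, whence the new weight and $\mmop(\cond{\deps}{\bexp})$ converge. If $\xi \in \varset$, each component $k$ has $\xi$-mean $\mu_\xi^{(k)}(\smoothEps) \to \mu_\xi^{(k)}$ and variance $\Sigma_{\xi\xi}^{(k)}(\smoothEps) \to 0$, while SOGA has a $\dirac$ there with per-component probability $\charfunc[\mu_\xi^{(k)} \bowtie c] \in \{0,1\}$. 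When $\mu_\xi^{(k)} \neq c$ the shifted threshold $c \pm \smoothDelta$ stays on the correct side of $\mu_\xi^{(k)}$ and the vanishing variance drives that component's probability to the right indicator for free. The delicate case is the boundary $\mu_\xi^{(k)} = c$, which is precisely not a continuity set of the limiting $\dirac$, so weak convergence is useless; here one computes the probability explicitly as $\Phi(\pm \smoothDelta / \sqrt{\Sigma_{\xi\xi}^{(k)}(\smoothEps)})$ and invokes consistency, $\smoothDelta / \sqrt{\Sigma_{\xi\xi}^{(k)}(\smoothEps)} \to +\infty$, to push it to the correct $0$ or $1$. Checking all four relations in $\{<,\le,\ge,>\}$ (and the conjunction used for $==$ at observe nodes) confirms the shift directions of \eqref{eq:smoothOpBool} produce the value dictated by strictness; the conditional moments converge likewise, as the near-$\dirac$ coordinate concentrates at $\mu_\xi^{(k)}$ and conditioning on an almost-sure event leaves the other coordinates essentially unchanged. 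The equality observe with $\xi \notin \varset$ is treated separately via $\normal(c,\smoothEps) \to \dirac_c$ and continuity of the likelihood integral $I$.

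Finally, invariant (a) with a fixed component count gives $\deps \to \dist$ in distribution, and combining this with (b) over all paths and the positive limiting denominator closes the argument through the first-step reduction. The main obstacle is exactly this degenerate boundary at $\testtype$/$\observetype$ nodes: weak convergence fails, so the argument must descend to the explicit Gaussian CDF and be rescued by the consistency hypothesis. This makes invariant (d) essential --- one must verify that every $\xi \in \varset$ genuinely has $\Sigma_{\xi\xi}(\smoothEps) \to 0$ and that this vanishing is preserved under affine maps, $\mmop$, and conditioning, so that the limit $\smoothDelta/\sqrt{\Sigma_{\xi\xi}(\smoothEps)} \to +\infty$ is both well-posed and usable at every such node.
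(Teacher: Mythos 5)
Your proposal matches the paper's proof in structure and substance: both reduce to path-wise convergence, induct over the nodes of each path, dispatch $\detstatetype$/$\rndstatetype$ nodes by continuity of the GM parameters, and isolate the crux at $\testtype$/$\observetype$ nodes acting on a vanishing-variance coordinate, where the consistency condition $\smoothDelta/\sqrt{\Sigma_{\xi\xi}(\smoothEps)} \to +\infty$ forces the smoothed probability to the correct limit (the paper phrases the non-degenerate case via the Portmanteau lemma and the degenerate conditional via the L\'evy--Prokhorov metric, whereas you use explicit Gaussian CDFs and component-wise parameter convergence, but these are interchangeable). Your explicit invariants (c)--(d), tying membership in $\varset$ to vanishing marginal variance, make precise an assumption the paper leaves implicit in its definition of consistency.
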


 We remark that, while in theory it is possible to let $\epsilon$ tend to $0$ to make the additional error vanish, in practice we want to use a value of $\smoothEps$ large enough to enable efficient gradient-based optimization, as will be discussed next. 
\section{Experimental Evaluation}
\label{sec:experiments}

\paragraph*{Setup.} We evaluate DeGAS along two axes: (i) we optimize parameters of benchmark probabilistic programs taken from the literature comparing  with Variational Inference (VI) and Markov Chain Monte Carlo (MCMC); (ii) 
we evaluate DeGAS on programs with discontinuities related to continuous variables, which, as we will show, make standard optimization methods inapplicable. 
In this case we consider classic models of cyber-physical systems (CPS), whose parameters are synthesized to optimize trajectory likelihoods or reachability properties. 

We implemented DeGAS in Pytorch, leveraging automatic differentiation for the gradient computation (cf. Appendix~\ref{sec:degas}) and employing Adam as optimizer. 
For all the experiments we set $\smoothDelta = \sqrt{\smoothEps}$ and report the results for $\smoothEps = 0.001$. 
The choice of $\smoothDelta$ is coherent for the consistency assumptions. 
Furthermore, we did not observe significant accuracy changes with respect to the SOGA approximation when varying $\smoothEps$, as long as it stays below $0.1$, as reported in Appendix \ref{app:conv}. 
This is further confirmed by the results reported in Appendix \ref{app:SOGAcomp}, pointing to the fact that the main source of error is that induced by SOGA, while $\smoothEps$ has a negligible empirical effect for the considered values. 
For a dedicated analysis of the SOGA error for inference tasks, including discussion on increasing the Gaussian approximation order, we refer the reader to \cite{DBLP:journals/pacmpl/RandoneBIT24}.
We run all the experiments on a machine equipped with an Apple M2 processor with 8 cores running at 3.49\,GHz and 8\,GB  RAM. 
For all the experiments we set a time-out threshold of $600$s.


\paragraph*{Comparison with VI and MCMC.}
We start by comparing DeGAS with Variational Inference (VI) and Markov Chain Monte Carlo (MCMC) as implemented in Pyro \cite{bingham2019pyro} on a set of $13$ case studies taken from the PPL literature (\cite{carpenter2017stan,gehr2016psi,huang2021aqua}). 
As a standard benchmark optimization task, we also consider a Proportional Integral
Derivative (PID) controller, described in Appendix \ref{app:pid}. 

For each case study, some of the significant variables are set as optimizable parameters while others as observables. 
For the observables a dataset $D$ of size $N = 1000$ is generated for the true value of the parameters. 
For the PID case study, data correspond to idealized representations of stable dynamics, maintaining the system at its equilibrium point. 
For each program, the posterior density computed by DeGAS is used to compute the negative log-likelihood $l(\params) = -\log f_{\params}(x)$ of the dataset $D$; then, we optimize to find $\params^* = \text{argmin}_\params l(\params).$

For both VI and DeGAS, we report the results for the value of the learning rate chosen among $\{0.001, 0.005, 0.01, 0.05, 0.1, 0.2\}$ that achieves the minimum error. 
 The optimization is stopped upon convergence within a tolerance of $10^{-8}$ with a patience of $30$ iterations, with a maximum of $500$ steps for DeGAS (except $1000$ for the PID model) and $1000$ for VI. 
 For VI, we use the \texttt{Trace\_ELBO} loss\footnote{\url{https://docs.pyro.ai/en/dev/inference\_algos.html}}.
 For MCMC, we run the NUTS kernel with $4$ chains, initially running inference with $500$ samples and $50$ warm-up steps. If $\hat{R}$ exceeds $1.05$, the procedure is repeated, increasing the number of samples by $500$ and the warm-up steps by $50$ when the total number of steps is between $2000$ and $5000$, and by $100$ thereafter, up to a maximum of $6000$ total steps. All hyperparameter values are reported in Appendix \ref{app:hyper}. 

Table \ref{tab:vi_mcmc_degas} collects the results, where `time' refers to the average runtimes (in s) out of 10 executions and `error' refers to the average relative difference between the optimized value and the real value across all the optimizable parameters of the program. As the table shows, our approach generally achieves performance comparable to the baseline methods, and in several benchmarks it outperforms them in terms of accuracy. Two models particularly challenging for DeGAS are Grass and Burglary. Both models contain a large number of nested if statements, that give rise to a high number of components in the GM approximation. This problem is also known for SOGA and can be mitigated using a pruning strategy~\cite{DBLP:journals/pacmpl/RandoneBIT24}, whose tuning is outside the scope of this paper. 

\begin{table}[t]
\centering
\begin{tabular}{l|c|c|cc|cc|cc}
\toprule
\textbf{Model} & \textbf{\# Par.} & \textbf{\# Var.} &
\multicolumn{2}{c|}{\textbf{VI}} & 
\multicolumn{2}{c|}{\textbf{MCMC}} & 
\multicolumn{2}{c}{\textbf{DeGAS}} \\
\cmidrule(lr){4-5} \cmidrule(lr){6-7} \cmidrule(lr){8-9}
 & & & \textbf{Time} & \textbf{Error}
 & \textbf{Time} & \textbf{Error} 
 & \textbf{Time} & \textbf{Error} \\
\midrule
\rowcolor{gray!20}
Bernoulli & $1$ & $2$ & $0.204$ & $0.024$ & $10.968$  & $0.023$& $1.986$ & $0.001$ \\
Burglary & $2$ & $7$  & $3.244$ & $0.913$ & $13.014$ & $0.141$  & $14.347$ & $1.686$ \\
\rowcolor{gray!20}
ClickGraph & $1$ & $6$ & $1.635$ & $0.376$ & $62.546$ & $0.096$ & $21.600$ & $0.086$\\
\rowcolor{gray!20}
ClinicalTrial  & $3$ & $6$ & $1.055$ & $1.980$ & $120.492$ & $2.536$ &  $17.070$ & $0.657$\\
Grass  & $4$ & $6$ & $3.077$ & $0.306$ & $58.151$ & $0.015$  & $317.582$ & $0.036$\\
\rowcolor{gray!20}
MurderMistery  & $1$ & $1$ & $0.792$ & $8.015$ & $14.036$ & $2.303$  & $2.508$ & $0.203$\\
\rowcolor{gray!20}
SurveyUnbiased  & $2$ & $2$ & $0.794$ & $0.013$ & $17.030$ & $0.013$  & $11.570$ & $0.008$\\
\rowcolor{gray!20}
TrueSkills & $3$ & $6$ & $2.056$ & $0.003$ & t.o. & t.o. & $1.664$ & $0.003$\\
\rowcolor{gray!20}
TwoCoins & $2$ & $1$ & $0.095$ & $0.864$ & $49.120$ & $0.873$  & $3.410$ & $0.688$\\
\rowcolor{gray!20}
AlterMu & $3$ & $5$ & $1.007$ & $1.043$ & t.o. & t.o. & $0.640$ & $0.312$\\
\rowcolor{gray!20}
AlterMu2 & $2$ & $3$ & $0.718$ & $0.556$ & $384.730$ & $0.139$  & $0.618$ & $0.096$\\
NormalMixtures & $3$ & $3$ & $
1.061$ & $0.909$ & $345.409$ & $0.053$ & $2.941$ & $0.386$\\
PID & $2$ & $55$ & $10.778$ & -- & nc & nc & $213$ & -- \\
\bottomrule
\end{tabular}
\caption{Comparison of inference methods across models.
Models highlighted in gray indicate cases where DeGAS outperformed both other methods in either accuracy or execution time. For the last model, PID, we are not able to quantify the error, as the true parameters are unknown, but  we show the result of the optimization in Figure \ref{fig:PID_opt_reach}. MCMC fails to converge for this case study ($\hat{R} \gg 1$.)}
\label{tab:vi_mcmc_degas}
\end{table}

Many benchmark programs include conditional statements with Boolean guards over discrete variables. When these conditions are enumerable, both VI
\begin{wrapfigure}{r}{0.35\textwidth}
\vspace{-1em} 
\begin{minipage}{\linewidth}
\begin{algorithmic}
    \STATE $v \sim gm([1], [\mu_1], [5])$
    \IF{$v > 0$}
        \STATE $y \sim gm([1], [\mu_2], [1])$
    \ELSE
        \STATE $y \sim gm([1], [-2], [1])$
    \ENDIF
\end{algorithmic}
\end{minipage}
\vspace{-1em} 
\end{wrapfigure}
 MCMC can correctly perform inference through enumeration.\footnote{https://pyro.ai/examples/enumeration.html} 
However, when the condition depends on continuous variables, the resulting model becomes discontinuous and nondifferentiable, breaking the smoothness assumptions of gradient-based VI and affecting the convergence of MCMC. 
This behavior is clearly visible in the program on the right inset adapted from~\cite{lee2019towards}, 
where $\mu_1$ and $\mu_2$ are optimizable parameters and whose true value is $0.5$ and $1$.   Figure~\ref{plot:discontinuous_loss} shows that VI has oscillatory loss values (similarly, MCMC does not reach a value of $\hat{R} < 1.05$, indicating nonconvergence), while DeGAS optimizes successfully. 
This highlights the complementarity of DeGAS with existing methods: while it may incur longer runtimes on some benchmarks, it can also enable efficient optimization in challenging discontinuous settings, where VI and MCMC are known to struggle.


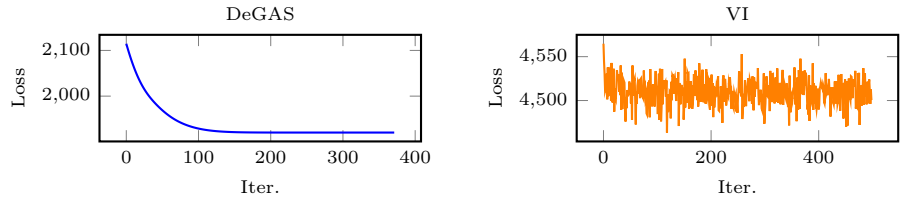
\begin{figure}[t]
\centering

\begin{minipage}{0.48\linewidth}
\begin{tikzpicture}
\begin{axis}[
    width=\linewidth,
    height=3cm,
    xlabel={Iter.},
    ylabel={Loss},
    title={DeGAS},
    thick,
    grid=none,
    label style={font=\scriptsize},
    tick label style={font=\scriptsize},
    title style={font=\scriptsize, yshift=-1ex}
]
\addplot[blue, thick] 
    table [x expr=\coordindex, y=mean, col sep=comma] 
    {csv_files/losses_test.csv};
\end{axis}
\end{tikzpicture}
\end{minipage}
\hfill
\begin{minipage}{0.48\linewidth}
\begin{tikzpicture}
\begin{axis}[
    width=\linewidth,
    height=3cm,
    xlabel={Iter.},
    ylabel={Loss},
    title={VI},
    thick,
    grid=none,
    label style={font=\scriptsize},
    tick label style={font=\scriptsize},
    title style={font=\scriptsize, yshift=-1ex}
]
\addplot[orange, thick] 
    table [x expr=\coordindex, y=mean, col sep=comma] 
    {csv_files/losses_VI_test.csv};
\end{axis}
\end{tikzpicture}
\end{minipage}

\caption{Comparison of DeGAS and VI loss curves for the program in Section 5.}
\label{plot:discontinuous_loss}
\end{figure}


\paragraph*{Parameter synthesis for cyberphysical systems.} We now consider models of cyber-physical systems (CPS) where control-flow branches depend on continuous random variables such that the previously described nondifferentiability issue arises. 
DeGAS, instead, provides an analytic form for the loss functions: we show examples of  both data likelihoods and complex reachability objectives.
We consider the following models taken from \cite{shmarov2015probreachverifiedprobabilisticdeltareachability,chaudhuri2010smooth} and fully described in Appendix \ref{app:cps}:
\begin{enumerate}
    \item \textbf{Thermostat} represents a thermostat maintaining fixed temperature by switching an heater on or off when the temperature crosses the thresholds $t_{on}$ and $t_{off}$ (optimizable parameters). The temperature evolves according to a first-order decay process with an added heating term when the heater is active and stochastic fluctuations due to noise.

    \item \textbf{Gearbox} represents a car’s transmission system. We consider a model with three gears. The controller determines when to shift from one gear to the next, allowing only consecutive transitions.
    The thresholds $s_i$ (optimizable parameters) specify the velocity at which gear $i$ is released and gear $i+1$ engages. 
    Additionally, gear shifts are not instantaneous: in our model, each transition lasts $0.3$ seconds, during which the gearbox remains disengaged. 
    The velocity grows linearly when the gear is engaged and decreases quadratically when disengaged. In both cases it is subject to stochastic fluctuations.
    
    \item \textbf{Controlled Bouncing Ball} represents a ball dropped onto a platform with a spring and damper. The spring pushes the ball back with a force determined by the compliance $C$, and the damper slows the ball according to the coefficient $R$ (both are optimizable parameters). When the ball reaches the platform, the system transitions to a reflection mode, returning to free fall once the height is positive.
\end{enumerate}
\paragraph{Parameter synthesis by data likelihoods.} 
As in the previous case studies, the goal of the optimization is to maximize the negative log-likelihood with respect to a set of $100$ trajectories generated from the models using the true parameter values.
Table~\ref{tab:cps_results} and Figure~\ref{plot:opt_by_traj} show that DeGAS consistently recovers parameter values close to the ground truth. The optimized mean trajectories (in red) accurately capture the behavior of the stochastic simulations (in grey), even when the initial trajectories (in blue) are far from the correct one. The normalized loss curves confirm convergence in all three cases.

\begin{table}[t]
\centering
\caption{Results for the CPS optimization from trajectories}
\begin{tabular}{lccccccc}
\toprule
\textbf{Case Study} & \textbf{Params} & \textbf{lr} & \textbf{Steps} & \textbf{Time} & 
\textbf{Init} & \textbf{Target} & \textbf{Result} \\
\midrule
Thermostat & $t_{ON}, t_{OFF} $ & $0.1$ & $40$ & $61.964$ & $(15,22)$ & $(17, 20)$ & $(16.78, 19.97)$ \\
Gearbox       & $s_1, s_2$ & $0.15$ & $200$ & $167.643$ & $(8, 12)$ & $(10, 20)$ & $(8.69, 19.24)$ \\
Bouncing Ball & $R, 1/C$ & $0.8$ & $100$ & $500.621$ & $(-1, 450)$ & $(7, 400)$ & $(5.27, 408.99)$ \\
\bottomrule
\end{tabular}
\label{tab:cps_results}
\end{table}

\begin{figure}[t]
    \centering
\begin{tikzpicture}
\begin{groupplot}[
    group style={group size=2 by 2, horizontal sep=0.9cm, vertical sep=1.7cm},
    width=6cm, height=3.5cm,
    xlabel={Time step},
    ylabel={},
    grid=both,
    ticklabel style={font=\scriptsize},
    label style={font=\scriptsize},
    title style={font=\small, yshift=-1pt},
    legend style={
        at={(0.5,-0.15)},
        anchor=north,
        legend columns=-1,
        font=\scriptsize,
    },
]

\newcommand{\plotcase}[4]{%
    \pgfplotstableread[col sep=comma]{#2}\rawtraj
    \pgfplotstableread[col sep=comma]{#3}\preddata
    \pgfplotstableread[col sep=comma]{#4}\initdata
    
    \pgfplotstablegetcolsof{\rawtraj}
    \pgfmathtruncatemacro{\numcols}{\pgfplotsretval-1}
    
    \foreach \r in {0,...,14} { 
        \addplot[black!30, very thin] table [
            x expr=\coordindex,
            y expr=\thisrowno{\r}
        ] {\rawtraj};
    }
    \addlegendimage{gray!60, thin}

    \addplot[very thick, red] table [x expr=\coordindex, y index=0] {\preddata};

    \addplot[very thin, blue!60] table [x expr=\coordindex, y index=0] {\initdata};

    \addplot[name path=upper, draw=none]
        table [x expr=\coordindex, y expr=\thisrowno{0}+\thisrowno{1}] {\preddata};
    \addplot[name path=lower, draw=none]
        table [x expr=\coordindex, y expr=\thisrowno{0}-\thisrowno{1}] {\preddata};
    \addplot[red!50, opacity=0.5] fill between[of=upper and lower];

    \addplot[name path=upper, draw=none]
        table [x expr=\coordindex, y expr=\thisrowno{0}+\thisrowno{1}] {\initdata};
    \addplot[name path=lower, draw=none]
        table [x expr=\coordindex, y expr=\thisrowno{0}-\thisrowno{1}] {\initdata};
    \addplot[blue!20, opacity=0.5] fill between[of=upper and lower];
}

\nextgroupplot[title={Thermostat}]
\plotcase{green!60!black}{csv_files/thermostat_orig_traj.csv}{csv_files/thermostat_opt.csv}{csv_files/thermostat_init.csv}

\nextgroupplot[title={Gearbox}]
\plotcase{green!60!black}{csv_files/gearbox_orig_traj.csv}{csv_files/gearbox_opt.csv}{csv_files/gearbox_init.csv}

\nextgroupplot[title={Controlled Bouncing Ball}]
\plotcase{green!60!black}{csv_files/bouncing_ball_orig_traj.csv}{csv_files/bouncing_ball_opt.csv}{csv_files/bouncing_ball_init.csv}

\nextgroupplot[
  title={Normalized Loss Functions},
  xlabel={Epoch},
  ylabel={Loss},
  xmin=0, xmax=100,
  legend style={at={(0.5,-0.2)}, anchor=north, legend columns=-1},
]
\pgfplotstableread[col sep=comma]{csv_files/thermostat_loss.csv}\lossA
\pgfplotstableread[col sep=comma]{csv_files/gearbox_loss.csv}\lossB
\pgfplotstableread[col sep=comma]{csv_files/bouncing_ball_loss.csv}\lossC

\addplot[thick, orange] table [x expr=\coordindex, y index=0] {\lossA};
\addplot[thick, brown] table [x expr=\coordindex, y index=0] {\lossB};
\addplot[thick, pink] table [x expr=\coordindex, y index=0] {\lossC};

\legend{T,G,B}
\end{groupplot}

\end{tikzpicture}
\caption{Mean trajectories with one standard deviation (red) obtained using optimized parameters, compared to trajectories generated with initial parameters (blue). The gray trajectories represent the samples used for optimization.}
\label{plot:opt_by_traj}
\end{figure}

\subsubsection{Synthesis by reachability}

A complementary synthesis objective consists in optimizing the parameters of the CPS to maximize the probability of reaching specific target states or satisfying certain behavioral constraints. 
We consider three such cases. 
Thermostat: the goal is to maximize the probability of reaching the region $T(t) \in [T_{min}, T_{max}] = [19.9, 20.1]$ in mode "ON" at various time points $(\tau_1 = 0.6, \tau_2 = 1.8 \text{ and } \tau_3 = 2.4)$; 
gearbox: we maximize the probability of maintaining velocity below $16$ across the whole trajectory;  bouncing ball: we want to maximize
the probability that the ball reaches $H \geq 7$ when falling after making one bounce. The optimizable parameters are the same as the ones in the previous section.

We report the results in Table \ref{tab:cps_reach_results} and Figure \ref{fig:opt_reach}. The table reports the initial and final loss values, since the true optimal parameters are unknown. 
As shown in the figure, for all models the optimized trajectories (in red) satisfy the desired behavioral constraints, unlike the initial trajectories (in blue). 
The optimization curves further confirm convergence across all three models.

\begin{table}[t]
\centering
\caption{Results for the CPS optimization from reachability properties}
\begin{tabular}{lccccccc}
\toprule
\textbf{Case Study} & \textbf{lr} & \textbf{Steps} & \textbf{Time(s)} & 
\textbf{Init} & \textbf{Init Loss} & \textbf{Result} & \textbf{Fin. Loss}\\
\midrule
Thermostat & $0.1$ & $100$ & $2770$ & $(16.5,22)$ & $-2e^{-12}$ & $(19.04, 21.30)$ & $-5695$\\
1Gearbox & $0.5$ & $40$ & $52.642$ & $(10, 20)$ & $1.000$ & $(1.92, -1.57)$ & $0.966$\\
Bounc. Ball & $0.2$ & $50$ & $184.264$ & $(7, 200)$ & $-0.002$ & $(-3.87, 210.14)$ & $-0.979$\\
\bottomrule
\end{tabular}
\label{tab:cps_reach_results}
\end{table}

\begin{figure}[t]
    \centering
\begin{tikzpicture}
\begin{groupplot}[
    group style={group size=2 by 2, horizontal sep=0.9cm, vertical sep=1.7cm},
    width=6cm, height=3.5cm,
    xlabel={Time step},
    ylabel={},
    grid=both,
    ticklabel style={font=\scriptsize},
    label style={font=\scriptsize},
    title style={font=\small, yshift=-1pt},
    legend style={
        at={(0.5,-0.20)},
        anchor=north,
        legend columns=-1,
        font=\scriptsize,
    },
]

\newcommand{\plotcase}[3]{%
    \pgfplotstableread[col sep=comma]{#2}\preddata
    \pgfplotstableread[col sep=comma]{#3}\initdata

    \addplot[very thick, red] table [x expr=\coordindex, y index=0] {\preddata};

    \addplot[very thin, blue] table [x expr=\coordindex, y index=0] {\initdata};

    \addplot[name path=upper, draw=none]
        table [x expr=\coordindex, y expr=\thisrowno{0}+\thisrowno{1}] {\preddata};
    \addplot[name path=lower, draw=none]
        table [x expr=\coordindex, y expr=\thisrowno{0}-\thisrowno{1}] {\preddata};
    \addplot[red!50, opacity=0.5] fill between[of=upper and lower];

    \addplot[name path=upper, draw=none]
        table [x expr=\coordindex, y expr=\thisrowno{0}+\thisrowno{1}] {\initdata};
    \addplot[name path=lower, draw=none]
        table [x expr=\coordindex, y expr=\thisrowno{0}-\thisrowno{1}] {\initdata};
    \addplot[blue!20, opacity=0.5] fill between[of=upper and lower];
}

\nextgroupplot[title={Thermostat}]
\plotcase{red}{csv_files/thermostat2_opt.csv}{csv_files/thermostat2_init.csv}
\addplot[
    only marks,
    mark=*,
    mark size=2pt,
    color=gray
] coordinates {
    (6, 20)
    (18, 20)
    (24, 20)
};


\nextgroupplot[title={Gearbox}]
\plotcase{green!60!black}{csv_files/gearbox2_opt.csv}{csv_files/gearbox2_init.csv}
\addplot[dashed, gray, thick, domain = 0:20]{16};

\nextgroupplot[title={Controlled Bouncing Ball}]
\plotcase{green!60!black}{csv_files/bouncing_ball2_opt.csv}{csv_files/bouncing_ball2_init.csv}
\addplot[dashed, gray, thick, domain = 0:35]{7};

\nextgroupplot[
  title={Normalized Loss Functions},
  xlabel={Epoch},
  ylabel={Loss},
  xmin=0, xmax=100,
]
\pgfplotstableread[col sep=comma]{csv_files/thermostat2_loss.csv}\lossA
\pgfplotstableread[col sep=comma]{csv_files/gearbox2_loss.csv}\lossB
\pgfplotstableread[col sep=comma]{csv_files/bouncing_ball2_loss.csv}\lossC

\addplot[thick, orange] table [x expr=\coordindex, y index=0] {\lossA};
\addplot[thick, brown] table [x expr=\coordindex, y index=0] {\lossB};
\addplot[thick, pink] table [x expr=\coordindex, y index=0] {\lossC};

\legend{T,G,B}
\end{groupplot}
\end{tikzpicture}
\caption{Mean trajectories with one standard deviation (red) obtained by optimizing the target represented in gray are compared with initial trajectories (blue)}
\label{fig:opt_reach}
\end{figure}


\section{Conclusions}\label{sec:conclusion}
We introduced DeGAS, a method to enable sample-free optimization of probabilistic programs via a differentiable program semantics. DeGAS smoothens SOGA, a semantics which represents a program evolution through Gaussian-mixture updates (affine maps, products, and truncated moments) that enable an analytic---but, in general, discontinuous,  representation of the posterior distribution. We demonstrated the potential of DeGAS in tackling synthesis tasks that cannot be directly solved through classic optimization methods such as variational inference and Markov chain Monte Carlo, due to the presence of discontinuities hindering convergence. 

In principle, the main ideas behind DeGAS can be applied to any semantics that (i) represents program states in a tractable form, (ii) supports closed-form (or deterministic-quadrature) updates for transforms and conditioning, and (iii) admits a smooth relaxation of discrete or measure-zero constructs whose parameter $\varepsilon \to 0$ recovers the original semantics. We conjecture that under these conditions, all statements about differentiability hold for $\varepsilon > 0$, and convergence back to the unsmoothed semantics follows by similar arguments as for SOGA.
Thus, an interesting direction for future work is to study other semantics such as probabilistic circuits, mixture of exponentials or general symbolic PPs that provide analytic or semi-analytic representations of distributions.

\section*{Data Availability Statement}
DeGAS is publicly available on Github at \url{https://github.com/frarandone/DeGAS}. This paper was accompanied by an artifact for the replication of the experimental results which can be found at: \url{https://zenodo.org/records/18197807}.

\section*{Acknowledgements}
This project has received funding from the European Union’s Horizon 2020 research and innovation programme under the Marie Skłodowska Curie grant agreement No 101205923 (MSCA Post-doctoral Fellowship EMBEr).

\noindent \raisebox{-0.2ex}{\includegraphics[height=3ex]{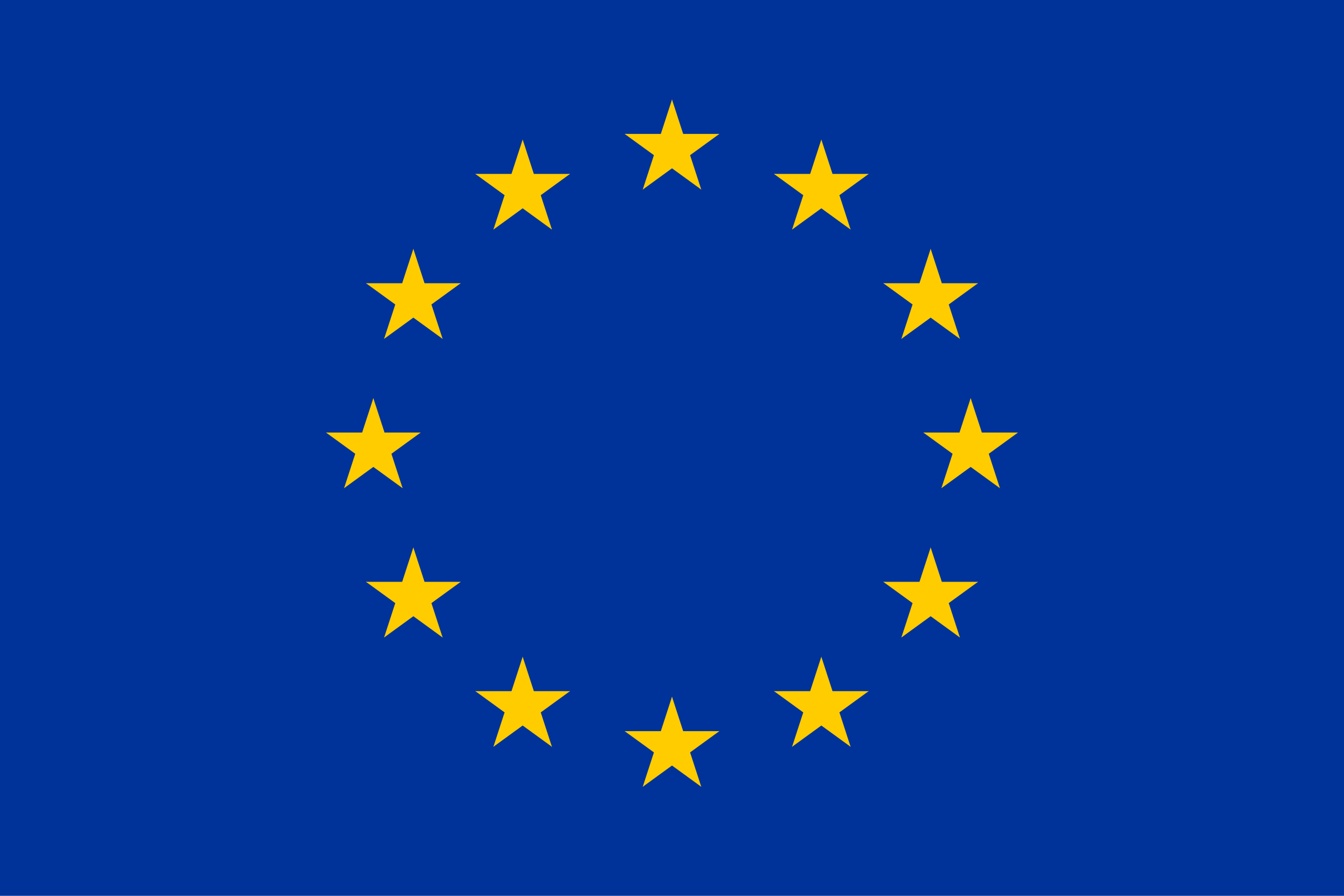}}

\bibliographystyle{splncs04}
\bibliography{bibliography}

\clearpage
\appendix
\section*{Appendix}
\section{Exact and SOGA semantics}
\label{app:soga}

\subsection{Definition of SOGA semantics}

To define the SOGA semantics, as in \cite{DBLP:journals/pacmpl/RandoneBIT24}, first the exact control-flow graph (cfg) semantics from \cite{kozen1979semantics} is introduced. 

We use the same labelling used in Section \ref{sec:diff-semantics}.

The exact semantics of a node $\node$ is a function taking as input a pair $(\p, \dist)$, and returning a pair  $(\p', \dist')$ of the same type. The only exception is the $\entrytype$ node which takes no input. We define the semantics of each node type as follows.

\begin{itemize}
	\item If $\node \colon \entrytype$, then  $\exact{\node}_\path = (1, \normal(0, \mathbb{I}_n)).$
	\item If $\node \colon \detstatetype$, then, letting $\dist'$ denote the distribution of $x[\xi \to \aexp]$,
	$$ \exact{\node}_{\path}(\p, \dist) = \begin{cases} (\pi, \dist) & \text{ if } \argfunc(\node) = \stmtSkip \\ (\p, \dist')  & \text{ if } \argfunc(\node) = \stmtAsgn{\xi}{\aexp}. \end{cases}$$ 
	\item If $\node \colon \rndstatetype$ and $\argfunc(\node) = \stmtDist{\xi}{\gmstmt}$, then $$\exact{\node}_\path(\p, \dist) = (\p, \marg{x \setminus \xi}{D} \otimes \gmg).$$
	\item If $ \node \colon \testtype$, letting $\argfunc(\node) = \bexp$:
	$$ \exact{\node}_\path(\p, \dist) = \begin{cases} 
		(\p \cdot \prob{\dist}{\bexp}, \cond{\dist}{\bexp}) & \condfunc(\succfunc(\node)) = \textit{true} \text{ and } \prob{\dist}{\bexp} \neq 0 \\
		(\p \cdot \prob{\dist}{\neg \bexp}, \cond{\dist}{\neg \bexp}) & \condfunc(\succfunc(\node)) = \textit{false} \text{ and } \prob{\dist}{\neg \bexp} \neq 0 \\		
		(0, \dist) & \text{else}. \end{cases} $$
	\item If $ \node \colon \observetype$, letting $\argfunc(\node) = \bexp$ and $I = \int_{\R^{n-1}} \density{\dist}(x[\xi \to c]) d(x\setminus \xi)$:
	$$ \exact{\node}_\path(\p, \dist) = \begin{cases}
		(\p \cdot I, \cond{\dist}{\bexp}) & \bexp \text{ is } \xi == c \\
		(\p \cdot \prob{\dist}{\bexp}, \cond{\dist}{\bexp}) & \bexp \text{ is } \xi \bowtie c \text{ and } \prob{\dist}{\bexp} \neq 0 \\
		(0, \dist) & \text{else}.
	\end{cases} $$
	\item If $\node \colon \exittype$, $\exact{\node}_\path(\p, \dist) = (\p, \dist)$.
\end{itemize}

The semantics of paths and valid programs is defined as for the differentiable semantics.


	
	
	
	

\subsection{SOGA semantics}

%
%

The SOGA semantics is then defined for each node using the moment-matching operator $\mmop$ defined in Section \ref{sec:diff-semantics}
$$ \soga{\node}_\path = \begin{cases} 
	\exact{\node}_\path & \node \colon \entrytype, \exittype \\
	(\id, \mmop) \circ \exact{\node}_\path & \text{else}.
	\end{cases} $$

The semantics of paths and programs is defined as in the previous cases.


\section{Additional proofs}

\addtocounter{theorem}{-2}

\subsection{Proof of Theorem 1}
\label{app:proofs-diff}

\begin{theorem}
	For any valid program $\stmt(\params)$, $\degas{\stmt(\params)}$ returns $(\p'(\params), \dist'(\params))$ such that $\p'(\params)$ and $\dist'(\params)$ are differentiable in $\params$.
\end{theorem}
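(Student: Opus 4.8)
The plan is to prove by structural induction over the nodes of a path the invariant that, at every step, the computed triple $(\p(\params),\dist(\params),\varset)$ has $\p(\params)$ differentiable in $\params$ and $\dist(\params)=\sum_i \pi_i(\params)\normal(\mu_i(\params),\Sigma_i(\params))$ a Gaussian mixture whose weights, means and covariances depend differentiably on $\params$ and whose covariances $\Sigma_i(\params)$ are positive definite (hence non-singular) for every admissible $\params$. Granting this invariant, the observation already recorded in the excerpt---that a mixture pdf of the displayed form can fail to be differentiable only through a vanishing determinant $|\Sigma_i(\params)|=0$---immediately yields differentiability of $\dist(\params)$, while differentiability of $\p(\params)$ is part of the invariant. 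The entire burden of the proof is therefore to maintain \emph{non-singularity} together with parametric differentiability across the node semantics, which is precisely what the smoothing was designed to ensure.

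For the base case the $\entrytype$ node returns $(1,\normal(0,\idn),\emptyset)$, for which the invariant holds trivially. For the affine steps: at a $\detstatetype$ node the assignment acts as an affine map $(\mu_i,\Sigma_i)\mapsto(A\mu_i+b,\,A\Sigma_i A^{\top})$ with $A,b$ built from constants and parameters, hence differentiable; the smoothed cases $\stmtAsgn{\xi}{c}$ and linear $\stmtAsgn{\xi}{\aexp}$ with $\xi$ not in $\aexp$ are exactly those where the $\xi$-row of $A$ would make $\xi$ a deterministic function of the remaining variables, and adding the fresh noise $z\sim\normal(0,\smoothEps)$ injects a strictly positive $\smoothEps^{2}$ on the $\xi$-diagonal, restoring positive definiteness. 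At a $\rndstatetype$ node, marginalising $x\setminus\xi$ preserves positive definiteness of that block, the fresh mixture for $\xi$ has every variance $\sigma'_j>0$ (zeros are replaced by $\smoothEps$), and the product measure $\otimes$ gives a block-diagonal covariance, so the result is again positive definite; here $\pi,\mu,\sigma$ are differentiable by construction.

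For the conditioning steps, at $\testtype$ and $\observetype$ nodes with an inequality guard, $\bexp'=\smoothOpBool(\bexp,\varset)$ is an open half-line (or slab) in the single coordinate $\xi$; since each component is non-degenerate, $\prob{\dist}{\bexp'}$ is a finite sum of values $\Phi\big((c\pm\smoothDelta-\mu_{i,\xi})/\sqrt{(\Sigma_i)_{\xi\xi}}\big)$, differentiable in $\params$ because $(\Sigma_i)_{\xi\xi}>0$ and $\Phi$ is smooth. The moment-matched conditional $\mmop(\cond{\dist}{\bexp'})$ has, componentwise, truncated first and second moments given by closed-form expressions in $\Phi$, the Gaussian density and the original moments, all differentiable; and the equality observe $\xi==c$ with $\xi\notin\varset$ reassigns $\xi$ to $\normal(c,\smoothEps)$, again supplying variance $\smoothEps^2$. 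Finally, the program semantics $\degas{\stmt}=\sum_{\path}\p_\path\dist_\path\big/\sum_{\path'}\p_{\path'}$ is a finite sum of differentiable terms divided by a strictly positive differentiable denominator---each $\p_\path>0$ because all conditioning is on open sets of non-degenerate distributions---hence differentiable, and it remains a Gaussian mixture with non-singular components.

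The hard part will be the preservation of non-singularity under the two operations that apply $\mmop$ \emph{without} injecting noise: the product assignment $\stmtAsgn{\xi}{x_ix_j}$ (the ``else'' branch of the $\detstatetype$ case) and the truncation induced by conditioning. I would handle both through the characterisation that a covariance matrix is singular exactly when a non-trivial affine combination of the coordinates is almost surely constant, and then rule this out by cases: if the new coordinate $\xi$ carries a nonzero weight in such a combination it would be a.s.\ affine in the existing coordinates (impossible, since a product of jointly non-degenerate Gaussians is genuinely quadratic and a truncated non-degenerate Gaussian is supported on a full-dimensional open set), whereas if its weight is zero the degeneracy would live already in the old coordinates, contradicting the inductive hypothesis. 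Making this uniform in $\params$---so that no determinant touches zero anywhere on the domain---and verifying that the normalising probabilities appearing in the closed-form truncated-moment formulas never vanish, is the crux of the argument.
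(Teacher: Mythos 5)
Your proposal is correct and follows essentially the same route as the paper's proof: induction over the nodes of each path maintaining differentiability plus non-singularity of the component covariances, with noise injection/Schur complements handling the affine cases, closed-form truncated moments handling conditioning, and the ``a nontrivial affine combination is a.s.\ constant'' characterisation ruling out degeneracy after a product assignment. The only place you stop short of the paper is the product-assignment case, where the paper completes your sketch by conditioning on $\xj,\xk$ and using the affine conditional-expectation property of Gaussian vectors to derive the contradiction (a rational, non-affine dependence of $\xj$ on $\xk$).
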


\proof

Since the semantics of a valid program is defined as 
$$ \degas{\stmt} = \sum_{\substack{\path \in \pathset{\stmt} \\ \degas{\path} = (\p, \dist)} } \frac{ \p \cdot \dist}{\sum_{\substack{\path' \in \pathset{\stmt} \\ \degas{\path'} = (\p', \dist')}} \p'}$$
and semantics of each path is defined as $$ \degas{\path} = \degas{\node_n}_\path \circ \hdots \circ \degas{\node_0}_\path$$
we can prove for each node $\node$ that if $(\p(\params), \dist(\params))$ are differentiable in $\params$ so is $\degas{\node}_\path(\p(\params), \dist(\params), \varset)$.

Without loss of generality it is assumed that $\dist(\params)$ is a Gaussian distribution with mean $\mu(\params)$ and covariance matrix $\Sigma(\params)$, where $\mu$ and $\Sigma$ are differentiable functions of $\params$ and $\Sigma$ is non-singular for every $\params$. 
At node $\node$, $\degas{\node}_\path$ outputs a $\p'(\params)$ and $\dist'(\params)$ where $\dist'$ is possibly a GM. In this case we denote with $\mu'(\params)$, $\Sigma'(\params)$ the mean and the covariance matrix of any component of $\dist'(\theta)$.
For each node type it is proved that $\p'(\params)$, $\mu'(\params)$ and $\Sigma'(\params)$ are differentiable in $\params$ and that $\Sigma'(\params)$ is non-singular.
The general case in which $\dist(\params)$ (resp. $\dist'(\params)$) is a GM can be recovered by applying the same arguments componentwise. 

\begin{itemize}[label=\textbullet]
	
	\item $\node : \detstatetype$. If $\argfunc(\node) = \stmtSkip$ diffrentiability follows trivially from the differentiability of $\p(\params)$ and $\dist(\params)$.
	If $\arg(\node) = \stmtAsgn{\xi}{\aexp}$, then $\degas{v}_\path(\p, \dist, \varset) = (\p, \dist', \varset)$ so we only need to prove differentiability of $\dist'$. We distinguish three cases.
	
	\begin{itemize}[label=$\ast$]
		\item $\aexp$ is linear, $\xi$ is not in $\aexp$. 
		
		Let $x'$ denote the subvector $x \setminus \xi$ and consider the vector $(x', x_i, z)$. Before the assignment the vector is Gaussianly distributed with
		$$ \mu(\params) = \begin{bmatrix} \mu(\params)_{x'} \\ \mu(\params)_{\xi} \\ 0 \end{bmatrix}, \quad \Sigma(\params) = \begin{bmatrix} 
			\Sigma_{x',x'} & \Sigma_{x',\xi} & 0 \\ 
			\Sigma_{\xi,x'} & \Sigma_{\xi,\xi} & 0 \\
			0 & 0 & \smoothEps^2 \end{bmatrix}. $$
		
		Let $\alpha(\params)$ be the row vector representing the linear combination in $e$ and $\alpha_0(\params)$ the zero-th order term. 
		Both $\alpha(\params)$ and $\alpha_0(\params)$ are differentiable in $\params$ since the coefficient are either constant or parameters themself. 
		Let $A(\theta)$ be the blockmatrix
		$$ A(\params) = \begin{bmatrix}
			\mathbb{I}_{n-1} & 0 & 0 \\
			\alpha(\params) & 0 & 1 \\
			0 & 0 & 1 \end{bmatrix}. $$
		
		After the assignment $\stmtAsgn{\xi}{e+z}$ the distribution over program variables is Gaussianly distributed with mean $\mu^*(\params)$ and covariance matrix $\Sigma^*(\params)$ given by (we suppress dependence on $\params$ at the r.h.s. for the sake of notation):
		$$ \mu^*(\params) = \begin{bmatrix} \mu_{x'} \\ \alpha \mu + \alpha_0 \\ 0 \end{bmatrix},  \Sigma^*(\params) = A\Sigma A^T = \begin{bmatrix} 
			\Sigma_{x',x'} & \Sigma_{x',\xi}\alpha^T & 0 \\
			\alpha\Sigma_{\xi,x'} & \alpha\Sigma_{x',x'}\alpha^T + \smoothEps^2 & \smoothEps^2 \\
			0 & 0 & \smoothEps^2 \end{bmatrix}. $$
		When considering the marginal over $x$, by properties of the Gaussians \cite{bishop2006pattern}, we just need to consider the submatrices indexed by $x = (x', \xi)$.
		Since $\mu, \Sigma, \alpha$ and $\alpha_0$ depend differentiably on $\params$ so do $\mu^*_{x}$ and $\Sigma^*_{x,x}$. 
		However, to prove that the resulting pdf is differentiable in $\params$ we still need to prove that $\Sigma^*_{x,x}$ is non-singular.
		This can be done by computing the determinant using Schur complement \cite{boyd2004convex}:
		$$ | \Sigma^*_{x,x} | = \alpha\Sigma_{x',x'}\alpha^T + \smoothEps^2 - \alpha\Sigma_{\xi,x'}\Sigma_{x',x'}^{-1}\Sigma_{x',\xi}\alpha^T = \smoothEps^2 > 0.$$
		
		\item $\aexp$ is linear, $\xi$ in $\aexp$. For this case the computations are analogous to the ones carried out above, except that the variable $z$ is not needed.
		
		\item $\aexp$ is non-linear, i.e. $\stmtAsgn{\xi}{\xj\xk}$. The differentiability with respect to $\params$ of the mean and the covariance matrix after the assignment, follows from the fact that this quantities can be computed via Isserlis's theorem \cite{wick1950evaluation} as polynomial functions of $\mu(\params)$ and $\Sigma(\params)$. Again, to prove differentiability of the pdf we need to prove that the new covariance matrix $\Sigma^*(\params)$ is non-singular. Let $x'$ denote the sub-vector $x' = x \setminus \{\xi, \xj, \xk\}$. We prove that the covariance matrix of $\tilde{x} = (x', \xj\xk)$ is non-singular. For the vector $(\xj, \xk, \xk\xk)$ the proof is analogous. These two facts together prove that $\Sigma^*$ is non-singular. We proceed by contraddiction, supposing that the covariance matrix $\tilde{\Sigma}$ of $\tilde{x}$ is singular. Then exists a non-zero vector $c$ such that $c^T\tilde{\Sigma}c=0$. This means that the variance of $c^T(\tilde{x}-\mathbb{E}[\tilde{x}])$ is $0$. But this means that $c^T(\tilde{x}-\mathbb{E}[\tilde{x}])$ is deterministic, and in particular equal to its mean, $0$. Therefore, $c$ can be partioned into $a$, $b$ such that:
		\begin{equation} \label{eq:proof1}
			a(\xj\xk - \mathbb{E}[\xj\xk]) + b^T(x' - \mathbb{E}[x']) = 0.
		\end{equation}
		We now use a known property of Gaussian vectors: the conditional expectation of a component with respect to others is an affine function \cite{bishop2006pattern}. This means that $\mathbb{E}[x' | \xj, \xk] = \alpha \xj + \beta \xk + \gamma$ for some vectors $\alpha, \beta, \gamma$. Therefore, conditioning \eqref{eq:proof1} to $\xj$ and $\xk$ and taking the mean we find that exist constants $\alpha, \beta, \gamma, \delta$ such that:
		$$ \alpha \xj\xk + \beta\xj + \gamma\xk + \delta = 0.$$
		If $\alpha = 0$ the equality reduces to $\beta \xj + \gamma \xk + \delta = 0$, which would imply a degeneracy in the original covariance matrix $\Sigma$, causing a contraddiction.
		If $\alpha \neq 0$ we can rewrite $\xj = \frac{ -\gamma \xk - \delta}{\alpha \xk + \beta}$.
		Since $\xk$ is non degenerate by hypothesis, the denominator is different than zero a.e.. 
		But by the property of Gaussian vectors, $\mathbb{E}[\xj | \xk]$ must be an affine function, which is violated by the fact that $\xj$ is a rational function of $\xk$. Therefore $\Sigma^*$ cannot be singular.
	\end{itemize}
	
	\item $\node \colon \rndstatetype$. In this case the differentiability is guaranteed by the differentiability of $\dist(\params)$ and of $\gmstmt$, which has never components with $0$ std.
	
	\item $\node \colon \observetype$. We distringuish the three cases.
	\begin{itemize}[label=$\ast$]
		
		\item $\bexp$ is $\xi == c$ and $\xi \not \in \varset$. Differentiability follows from the differentiability of the pdfs of $\normal(c, \smoothEps)$ and of $\cond{\dist}{\bexp}$, guaranteed by the fact that its moments are polynomials in $\mu(\params)$ and $\Sigma(\params)$. 
		In addition, if $\Sigma(\params)$ is non-degenerate so is the covariance matrix of $\cond{\dist}{\bexp}$.
		
		\item $\bexp$ is $\xi \bowtie c$ in all the other cases. 
		In this case the predicate is transformed into $\bexp' = \smoothOpBool(\xi \bowtie c, \varset)$. 
		Since $\dist$ is non-degenerate and all the conditions defined by $\smoothOpBool(\bexp, \varset)$ have positive measure $\prob{\dist}{\bexp'}$ is differentiable and different than 0 (it is computed using the cdf of $\dist$ which is a differentiable function).
		The moments of the truncated distribution $\cond{dist}{\bexp'}$ are computed using the formulas in \cite{kan2017moments} as functions of $\mu(\params)$ and $\Sigma(\params)$ and of the extrema of the truncation ($-\infty$ and $+\infty$ for all variables except for $\xi$, for which they involve $c$). 
		Since $\dist$ is non-degenerate, the formulas in \cite{kan2017moments} are differentiable in their arguments. 
		Therefore, the moments of $\cond{dist}{\bexp'}$ are be differentiable in $\params$.
		In addition, we need to prove that the new covariance matrix will be non-singular, but this follows from the fact that the initial covariance matrix is non-singular and the truncations are made on sets of positive measure. 
	\end{itemize}
	
	\item $\node \colon \testtype$. 
	We let $\argfunc(\node) = \bexp$ and $\bexp' = \smoothOpBool(\bexp, \varset)$. 
	The computation of the probabilities $\prob{\dist}{b'}$ (resp. $\neg b'$) and of the moments of the conditioned distributions $\cond{\dist}{b'}$ (resp. $\neg b'$) uses the formulas from \cite{kan2017moments}, as in the case of $\observetype$ nodes. 
	Therefore both the probabilities and the truncated moments are differentiable in $\params$
	
	\item $\node \colon \entrytype$ and $\node \colon \exittype$ the conclusion follows trivially. 
\end{itemize}

\subsection{Proof of Theorem 2}
\label{app:proofs-conv}

\begin{theorem}
	Suppose $\stmt$ is a valid program and $\degas{\stmt}$ $\smoothOpBool$ is applied consinstently when computing $\degas{\stmt}$. Then $\lim_{\smoothEps \to 0} \degas{\stmt} = \soga{\stmt}$ (where convergence for the distributions is intended in the sense of \emph{convergence in distribution}).
\end{theorem}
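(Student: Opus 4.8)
The plan is to prove convergence compositionally, mirroring the inductive structure of the semantics. Since $\degas{\stmt}$ and $\soga{\stmt}$ are built from the same set of paths $\pathset{\stmt}$ and the semantics of a path is a composition of node operators, I would first establish convergence path-by-path, then combine the paths in the final normalized mixture. The central reduction is to track, at every node, the finite-dimensional parameters of the Gaussian-mixture state---the scalar $\p$, together with the weights $\pi_i$, means $\mu_i$, and covariances $\Sigma_i$ of each component---and to show that the DeGAS parameters converge, as $\smoothEps \to 0$, to the corresponding SOGA parameters. This suffices because, by the L\'evy continuity theorem, the characteristic function of $\normal(\mu, \Sigma)$ is jointly continuous in $(\mu, \Sigma)$ even when the limiting covariance becomes singular (a narrow Gaussian tends to a Dirac), so parameter convergence---allowing degenerate limits---implies convergence in distribution of each mixture, and hence of the whole state.

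The induction is along the nodes $\node_0 \dots \node_n$ of a fixed path, the base case being the $\entrytype$ node, where both semantics output $(1, \normal(0, \idn))$ identically. For the inductive step I would assume the incoming DeGAS triple $(\peps, \deps, \varset)$ has parameters converging to those of the incoming SOGA pair $(\p, \dist)$ and treat each node type in turn. The $\detstatetype$ cases reduce to observing that the injected noise $z \sim \normal(0, \smoothEps)$ perturbs the mean and covariance by quantities vanishing with $\smoothEps$, while the nonlinear (product) case and the moment-matching operator $\mmop$ act as polynomial, hence continuous, functions of $(\mu, \Sigma)$; the $\rndstatetype$ node is handled identically, the only effect being that components with zero std receive std $\smoothEps \to 0$, so $\normal(\mu_j, \smoothEps^2) \to \dirac_{\mu_j}$. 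The equality-observe case with a fresh variable is likewise direct, since $\normal(c, \smoothEps) \to \dirac_c$ and the integral $I$ depends continuously on the non-degenerate incoming density.

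The heart of the argument---and the main obstacle---is the $\testtype$ and $\observetype$ nodes carrying an inequality guard $\xi \bowtie c$, because conditioning/truncation is not continuous with respect to weak convergence in general. Here I would split on whether $\xi \in \varset$. If $\xi \notin \varset$, then by the correspondence between $\varset$-membership and degeneracy the limiting coordinate $\xi$ is non-degenerate, the smoothing operator leaves the predicate unchanged ($\bexp' = \bexp$), and both $\prob{\cdot}{\bexp}$ and the truncated moments of \cite{kan2017moments} are continuous functions of the Gaussian parameters at the limit, so convergence follows from the inductive hypothesis. If $\xi \in \varset$, then $\xi$ is degenerate in the SOGA limit (variance $\Sigma_{\xi,\xi}(\smoothEps) \to 0$) and the predicate is replaced by $\bexp' = \smoothOpBool(\bexp, \varset)$; I would analyze the Gaussian CDF expressions directly. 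Writing the conditioning probability in the boundary sub-case (mean of $\xi$ equal to $c$) as $\Phi\!\bigl(\pm \smoothDelta / \sqrt{\Sigma_{\xi,\xi}(\smoothEps)}\bigr)$, the consistency requirement $\smoothDelta / \sqrt{\Sigma_{\xi,\xi}} \to +\infty$ forces this to $1$ (for $\ge$, relaxed to $x > c-\smoothDelta$) or to $0$ (for $>$, tightened to $x > c+\smoothDelta$), exactly matching the closed/open behaviour of the SOGA predicate on the point mass; the non-boundary sub-cases (mean $\neq c$) converge to $0$ or $1$ automatically, regardless of the rate. The truncated moments are treated by the same CDF asymptotics, and the associated covariance remains non-singular before the limit. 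Finally, I would assemble the path-wise limits into the mixture $\degas{\stmt}$: paths whose SOGA probability limit is $0$ contribute a vanishing weight $\peps \cdot \deps \to 0$ to the numerator (so their possibly ill-behaved distribution is irrelevant), validity guarantees that the limiting normalizing denominator $\sum_{\path'} \p'$ is strictly positive, and a continuous-mapping argument for the normalized convex combination yields $\lim_{\smoothEps \to 0} \degas{\stmt} = \soga{\stmt}$.
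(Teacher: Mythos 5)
Your proposal is correct and follows essentially the same route as the paper's proof: path-wise decomposition, induction over the nodes of each path, the observation that the only delicate cases are inequality guards on (nearly) degenerate coordinates, and the use of the consistency condition to force $\Phi\bigl(\pm\smoothDelta/\sqrt{\Sigma_{\xi,\xi}}\bigr)$ to the correct limit, with zero-probability paths dispatched by showing only that their weight vanishes. The sole difference is cosmetic: you certify weak convergence via parameter convergence and the L\'evy continuity theorem, whereas the paper invokes the Portmanteau lemma in the non-degenerate case and a L\'evy--Prokhorov triangle inequality in the degenerate one; both routes are standard and interchangeable here.
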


\begin{proof}
	Recall that both for $\degas{\cdot}$ and $\soga{\cdot}$ the semantics of $\stmt$ is defined as a mixture of the semantics of the paths in $\pathset{\stmt}$. Since we are only considering finite programs, the convergence of the program semantics is implied by the convergence of the path semantics. 
	First, we only consider paths $\path \in \pathset{\stmt}$ such that $\soga{\path} = (\p, \dist)$ with $\p > 0$. 
	Consider one such path with differentiable semantics $\degas{\path} = \degas{\node_n}_\path \circ \hdots \circ \degas{\node_0}_\path$. We prove that this converges to $\soga{\path} = \soga{\node_n}_\path \circ \hdots \circ \soga{\node_0}_\path$ by induction on $n$.
	
	For $n=1$ we only have an $\entrytype$ and an $\exittype$ node, so the conclusion follows trivially. 
	
	Suppose that up to node $\node_{i-1}$ one has $\degas{\node_{i-1}}_\path \circ \hdots \circ \degas{\node_0} = (\peps, \deps, \varset)$ and $\soga{\node_{i-1}}_\path \circ \hdots \circ \soga{\node_0}_\path = (\p, \dist)$ with $\lim_{\smoothEps \to 0} (\peps, \deps) = (\p, \dist)$. We prove that $\degas{\vi}_\path(\peps, \deps, V) = (\peps', \deps', \varset')$ such that $\lim_{\smoothEps \to 0} (\peps', \deps') = (\p', \dist') = \soga{\vi}_\path(\p, \dist)$.
	Without loss of generality we assume that $\deps$ (resp. $\dist$) is a Gaussian with mean $\meps$ (resp. $\mu$) and covariance matrix $\sigmaeps$ (resp. $\Sigma$). 
	When acting on a Gaussian $\deps$, $\degas{\node}$  and $\soga{\node}$ can in general yield a GM with a fixed number of components, which is the semantics in both semantics, for all values of $\smoothEps$. 
	In this case We will denote with $\meps'$ (resp. $\mu$) and $\sigmaeps'$ (resp. $\Sigma'$) the mean and the covariance matrix of any component of $\deps'$ (resp. $\dist'$), assuming to choose corresponding components in each distributions. 
	The proof can be generalized to GMs by  applying the same arguments componentwise.
	Observe that if $\deps'$ and $\dist$ are both Gaussians it sufficient to prove that $\meps' \to \mu$ and $\sigmaeps' \to \Sigma$, since, by convergence of the characteristic function, this implies convergence in distribution.
	
	We consider the possible type of $\vi$ separately.
	
	\begin{itemize}
		\item $\node_i \colon \detstatetype$. If $\argfunc(\node) = \stmtSkip$, conclusion follows trivially. If $\argfunc(\node) = \stmtAsgn{\xi}{\aexp}$, we only need to prove convergence of $\deps'$ to $\dist'$. Suppose $\aexp$ is linear with $\alpha$ vector of coefficients and $\alpha_0$ zero-th order term. Let $x'$ denote the vector $x \setminus \xi$, so that $x$ can be represented as $(x', \xi)$. Then:
		$$ \meps' = \begin{bmatrix} [\meps]_{x'} \\ \alpha \meps + \alpha_0 \end{bmatrix}$$
		$$  \sigmaeps' = A\sigmaeps A^T =
		\begin{cases} 
			\begin{bmatrix} 
				[\sigmaeps]_{x',x'} & [\sigmaeps]_{x',\xi}\alpha^T \\
				\alpha[\sigmaeps]_{\xi,x'} & \alpha[\sigmaeps]_{x',x'}\alpha^T + \smoothEps^2
			\end{bmatrix} & \xi \text{ not in } \aexp \\
			\begin{bmatrix} 
				[\sigmaeps]_{x',x'} & [\sigmaeps]_{x',\xi}\alpha^T \\
				\alpha[\sigmaeps]_{\xi,x'} & \alpha[\sigmaeps]_{x',x'}\alpha^T
			\end{bmatrix} & \xi \text{ in } \aexp
		\end{cases}
		$$ 
		For $\smoothEps \to 0$ it is trivial to check that $\meps'$ and $\sigmaeps'$ converge to $\mu'$, $\Sigma'$.
		If $\aexp$ is non linear, $\meps'$ and $\sigmaeps'$ are obtained using Isserlis' theorem \cite{wick1950evaluation} as polynomial functions of $\meps$ and $\sigmaeps$. Since polynomials are continuous, the limit is preserved.
		
		\item $\node_i \colon \rndstatetype.$ Again, we only need to prove convergence for $\deps$ but this follows immediately from convergence of $\gm_{\pi, \mu, \sigma'}$ to $G_{\pi, \mu, \sigma}$ (recall $\sigma_j' = \sigma_j$ if $\sigma_j>0$ and $\smoothEps$ else).
		
		\item $\node_i \colon \observetype$. 
		If $\argfunc(\node) = \xi == c$ and $\xi \not \in \varset$ convergence follows immediately from $\lim_{\smoothEps \to 0} \normal(c, \smoothEps) = \normal(c, 0)$. 
		Let $\bexp$ be $\xi \bowtie c$ in one of the remaining cases.
		We consider two different scenarios.
		If $\Sigma_{\xi, \xi} \neq 0$ we have that$\prob{\dist}{\partial\bexp} = 0 $ where $\partial\bexp$ denotes the border of the set $\bexp$. In this case $\bexp$ is called a continuity set for $\dist$. Moreover, since $\varset$ only contains variables that would be discrete in the original program $\xi \not \in \varset$ and $\smoothOpBool(\bexp, \varset) = \bexp$. Therefore convergence reduces to proving $\prob{\deps}{\bexp} \to \prob{\dist}{\bexp}$ and $\cond{(\deps)}{\bexp} \to \cond{\dist}{\bexp}$. Since $\bexp$ is a continuity set for $\dist$ both limits are guaranteed by Portmanteau lemma \cite{billingsley2013convergence}. 
		If $\Sigma_{\xi, \xi} = 0$ then $\marg{\xi}{D}$ has a point mass at $\xi^0$. Moreover, since we are working under the hypothesis that $\degas{\path} = (\p_\path, \dist_\path)$ with $\p_\path > 0$, then $\prob{\dist}{\bexp} > 0$. Therefore, $\xi^0 \in (\bexp)_i$, where $(\bexp)_i$ is the projection of $\bexp$ on the $i$-th coordinate. Therefore $\prob{\dist}{\bexp} = 1$ and $\cond{\dist}{\bexp} = \dist$. For $\deps$ we have $(\sigmaeps)_{\xi, \xi} \to 0$ and $(\meps)_{\xi} \to \xi^0$. Now consider let $\beps = \smoothOpBool(b, V)$ and $(\beps)_i$ be the projection of $\beps$ to the $i$-th coordinate. Then $\xi^0 \in (\beps)_i$. Moreover, since $\smoothOpBool$ is applied consistently $\smoothDelta(\smoothEps) / \sqrt{\sigmaeps} \to +\infty$ and this implies $\lim_{\smoothEps \to 0} \prob{\marg{\xi}{\deps}}{(\beps)_i} = 1$. But then $\lim_{\smoothEps \to 0} \prob{\deps}{\beps} = 1 = \prob{\dist}{\bexp}$. For the distributions we can prove the limit using the metric induced by weak convergence, called the Levi Prokhorov metric and denoted as $\lpdist(\cdot, \cdot)$. Then, 
		$$ \lpdist(\cond{(\deps)}{\beps}, \dist) \le \lpdist(\cond{(\deps)}{\beps}, \deps) + \lpdist(\deps, \dist).$$
		The first term at the r.h.s. tends to zero because $\smoothDelta(\smoothEps) / \sqrt{\sigmaeps} \to +\infty$, while the second term tends to zero by hypothesis.
		
		\item $\node_i \colon \testtype$. In this case the proof works analogously as in the $\observetype$ case.
	\end{itemize}
	We have now proved that $\degas{\path} \to \soga{\path}$ for paths with positive probabilities. For paths such that $\soga{\path} = (0, \dist)$ we can just prove that $\peps \to 0$. Let $\node_i$ be the first node in the path such that $\soga{\node_i}(\p, \dist) = (\p', \dist')$ with $\p' = 0$. Then up to this node the differentiable semantics output triples $(\peps, \deps, \varset)$ such that $\peps \to \p$, $\deps \to \dist$ and we want to prove $\peps' \to 0$. We know that $\node_i$ must either be an $\observetype$ or a $\testtype$ node. Suppose is an $\observetype$ node (the other case can be treate in the same way). Let $\arg(\node_i) = \bexp = \xi \bowtie c$. We can rule out the case in which $\bowtie$ is $==$ and $\xi \not \in \varset$ as this case can never yield probability 0. In the other cases we can distinguish again two scenarios. If $\Sigma_{\xi, \xi} \neq 0$ then $\bexp$ is a continuity set for $\dist$, $\smoothOpBool(\exp, \varset) = \bexp$ and convergence is guaranteed by the Portmanteau lemma. If $\Sigma_{\xi, \xi} = 0$, the marginal $\marg{\xi}{\dist}$ is a point mass $\xi^0 \not \in (\bexp)_i$.  The $\deps$ instead are Gaussians concentrating their probability mass around $\xi^0$. Exploiting again the consistency of the operator $\smoothOpBool$ we can again prove $\prob{\deps}{\beps} \to 0$.	
\end{proof}
\section{DeGAS}
\label{sec:degas}

In this section we give an overview of the tool \degastool that implements the differentiable semantics presented earlier.
DeGAS is implemented in PyTorch, and can be used programmatically from Python.
A prototype implementation can be found in \url{url-anonymized}.

\begin{algorithm}
	\caption{DeGAS($\textit{input\_file}, \textit{params\_dict}, \textit{loss\_func}, \smoothEps, \textit{steps}$):}
	\label{alg:degas}
	\begin{algorithmic}[1]
		\STATE $\textit{cfg} = \texttt{smooth\_cfg}(\textit{input\_file}, \smoothEps)$
		\STATE $\textit{optimizer} = \texttt{import\_optimizer\_from\_torch()}$
		\FOR{$\texttt{ \_ in range}(\textit{steps})$}
			\STATE $\textit{optimizer.zero\_grad()}$  
			\STATE $\textit{current\_dist} = \texttt{compute\_degas}(\textit{cfg}, \textit{params\_dict})$
			\STATE $\textit{loss} = \textit{loss\_func}(\textit{current\_dist})$
			\STATE $\textit{loss.backward()}$
			\STATE $\textit{optimizer.step()}$
		\ENDFOR
		\RETURN $\textit{params\_dict}$
	\end{algorithmic}
\end{algorithm}

The algorithm implemented by \degastool is outlined in Algorithm~\ref{alg:degas}. It takes as input a $\texttt{.soga}$ file $\textit{input\_file}$, which encodes a program in the syntax of Section~\ref{sec:syntax}; a dictionary $\textit{params\_dict}$ whose keys are parameter names (also appearing in the program, prefixed by an underscore) and whose values are PyTorch tensors with $\texttt{requires\_grad} = \textit{True}$; a loss function $\textit{loss\_func}$ mapping distributions to real values; the smoothing parameter $\smoothEps > 0$; and the number of optimization steps $\textit{steps}$. Where required, constraints on individual parameters can be specified directly in the optimization loop by clamping them to their respective interval domain.

The algorithm begins by constructing the control-flow graph (cfg) of the program via $\texttt{smooth\_cfg}$. This step also prepares the information needed for computing the differentiable semantics: for instance, assignments $\stmtAsgn{x}{c}$ are transformed into $\stmtAsgn{x}{c + gm([1], [0], [\smoothEps])}$, and analogous rewritings are applied to other instructions. Moreover, the set of smoothed variables $\varset$ is also pre-computed at each node. The resulting object $\textit{cfg}$ encodes the program’s cfg, enriched with all information required to evaluate $\degas{\cdot}$.

After constructing the cfg, a gradient-based optimizer from PyTorch is initialized (line~2) and the optimization loop begins (line~3). At each step, the posterior distribution is computed by $\texttt{compute\_degas}$ (line~5), the loss is evaluated (line~6), and the parameters in $\textit{params\_dict}$ are updated via backpropagation (lines~7–8). After all iterations, the optimized parameters are returned.

\paragraph{Computational cost of DeGAS.}
In computing the semantics of a program, DeGAS shares the same cfg structure of SOGA. 
The operations introduced to guarantee differentiability do not alter the asymptotic computational cost, since computation of the program differentiable semantics involves the same elementary operations and function calls described for SOGA \cite{DBLP:journals/pacmpl/RandoneBIT24}.
The overall complexity is therefore still $O(|V|2^{|T|}n^4)$,
where $|V|$ denotes the number of nodes in the cfg, $|T|$ denotes the number of $\testtype$ and $\observetype$ nodes and $n$ is the number of variables in the program, assuming that all the distributions used in $\rndstatetype$ nodes have been pushed to the initial distribution. 

Analogously to SOGA, a pruning strategy can be introduced in DeGAS to limit the number of mixture components and control the computational overhead. However, in the case studies presented in this paper this was not necessary.
\section{Experimental evaluation details}

\subsection{Convergence Experiments}
\label{app:conv}
    To show how the convergence works in practice, we consider the following three programs:
    \begin{align*}
        \stmt_1 &\colon \stmtAsgn{x}{0}; \stmtObserve{x \ge 0}; \\
        \stmt_2 &\colon \stmtAsgn{x}{0}; \stmtObserve{x > 0}; \\
        \stmt_3 &\colon \stmtDist{x}{gm([0.5, 0.5], [0, 1], [0, 0])}; \stmtObserve{x == 0};
    \end{align*}

In the following table we report the result of $\p$, $\mu$ and $\sigma$ when $\smoothEps$ decreases and $\smoothOpBool$ is applied consistently, choosing $\smoothDelta = \sqrt{\smoothEps}$. For $\smoothEps = 10^{-4}$ the values of all three outputs are equal to that of SOGA up to the third decimal digit.

\begin{table}[ht]
\centering
\small
\setlength{\tabcolsep}{3pt}
\renewcommand{\arraystretch}{0.9}
\begin{tabular}{|c|ccc|ccc|ccc|} 
    \hline & \multicolumn{3}{c|}{$P_1$} & \multicolumn{3}{c|}{$P_2$} & \multicolumn{3}{c|}{$P_3$} \\ 
    \cline{2-10} 
    & $\p$ & $\mu$ & $\sigma$ & $\p$ & $\mu$ & $\sigma$ & $\p$ & $\mu$ & $\sigma$ \\ 
    \hline 
    $\smoothEps=10^{-1}$ & 0.9992 & 0.0002 & 0.0995 & 0.0008 & 0.3434 & 0.0256 & 0.4992 & 0.0000 & 0.0991 \\
    $\smoothEps=10^{-2}$ & 1.0000 & 0.0000 & 0.0100 & 0.0000 & 0.0000 & 0.0100 & 0.5000 & 0.0000 & 0.0100 \\ 
    $\smoothEps=10^{-3}$ & 1.0000 & 0.0000 & 0.0010 & 0.0000 & 0.0000 & 0.0010 & 0.5000 & 0.0000 & 0.0010 \\ 
    $\smoothEps=10^{-4}$ & 1.0000 & 0.0000 & 0.0001 & 0.0000 & 0.0000 & 0.0001 & 0.5000 & 0.0000 & 0.0001 \\ 
    SOGA & 1.0000 & 0.0000 & 0.0000 & 0.0000 & 0.0000 & 0.0000 & 0.5000 & 0.0000 & 0.0000 \\ \hline \end{tabular}
\caption{Results for DeGAS and SOGA as the value of $\smoothEps$ decreases.}
\label{tab:degassoga}
\end{table}

By testing DeGAS on the synthesis of the parameters of probabilistic programs varying the value of $\epsilon$ we obtain consistent results except, in some cases highlighted in gray, for $\epsilon = 0.1$. These programs are more complex, involving multiple variables and nested conditional branches.

\begin{table}[H]
\centering
\begin{tabular}{c|cccc}
\toprule
\textbf{Model} & $\boldsymbol{\epsilon = 0.1}$ & $\boldsymbol{\epsilon = 0.01}$ & $\boldsymbol{\epsilon = 0.001}$ & $\boldsymbol{\epsilon = 0.0001}$ \\
\midrule
Bernoulli  & $0.001$ & $0.001$ & $0.001$ & $0.001$ \\
Burglary  & $1.6857$ & $1.6857$ & $1.6857$ & $1.6857$ \\
\rowcolor{gray!20}
ClickGraph  & $0.375$ & $0.085$ & $0.086$ & $0.086$ \\
\rowcolor{gray!20}
ClinicalTrial  & $3.463$ & $0.659$ & $0.657$ & $0.657$ \\
\rowcolor{gray!20}
Grass  & $0.695$ & $0.036$ & $0.036$ & $0.036$ \\
MurderMistery  & $0.203$ & $0.203$ & $0.203$ & $0.203$ \\
SurveyUnbiased  & $0.008$ & $0.008$ & $0.008$ & $0.008$ \\
TrueSkills  & $0.003$ & $0.003$ & $0.003$ & $0.003$ \\
TwoCoins & $0.688$ & $0.688$ & $0.688$ & $0.688$ \\
AlterMu & $0.312$ & $0.312$ & $0.312$ & $0.312$ \\
AlterMu2 & $0.096$ & $0.096$ & $0.096$ & $0.096$ \\
NormalMixtures & $0.386$ & $0.386$ & $0.386$ & $0.386$ \\
\bottomrule
\end{tabular}
\label{tab:epsilon-results}
\caption{Performance on the programs parameters synthesis varying the value of $\epsilon$
}
\end{table}

\subsection{PID}
\label{app:pid}
The Proportional Integral Derivative (PID) is described by the following dynamics.

\begin{align*}
&d_k = \pi - \theta_k \\
&T_k = s_0 d_k + s_1 \omega_k + s_2 I_k \\
&I_{k+1} = 0.9 I_k + d_k \Delta t\\
&\omega_{k+1} = \omega_k + \frac{\Delta t}{10} T_k + \mathcal{N}(0, \sigma_\omega^2) \\
&\theta_{k+1} = \theta_k + \frac{\Delta t}{2} (\omega_k + \omega_{k+1}) + \mathcal{N}(0, \sigma_\theta^2)
\end{align*}

The result of DeGAS optimization is shown in Figure \ref{fig:PID_opt_reach}.

\begin{figure}[H]
    \centering
\begin{tikzpicture}
\begin{groupplot}[
    group style={group size=2 by 2, horizontal sep=0.9cm, vertical sep=1.7cm},
    width=7cm, height=5cm,
    xlabel={Time step},
    ylabel={},
    grid=both,
    ticklabel style={font=\scriptsize},
    label style={font=\scriptsize},
    title style={font=\small, yshift=-1pt},
    legend style={
        at={(0.5,-0.20)},
        anchor=north,
        legend columns=-1,
        font=\scriptsize,
    },
]
\newcommand{\plotcase}[4]{%
    \pgfplotstableread[col sep=comma]{#2}\preddata
    \pgfplotstableread[col sep=comma]{#3}\initdata
    \pgfplotstableread[col sep=comma]{#4}\VIdata

    \addplot[very thick, red] table [x expr=\coordindex, y index=0] {\preddata};

    \addplot[very thick, green!60!black ] table [x expr=\coordindex, y index=0] {\VIdata};

    \addplot[very thin, blue] table [x expr=\coordindex, y index=0] {\initdata};

    \addplot[name path=upper, draw=none]
        table [x expr=\coordindex, y expr=\thisrowno{0}+\thisrowno{1}] {\preddata};
    \addplot[name path=lower, draw=none]
        table [x expr=\coordindex, y expr=\thisrowno{0}-\thisrowno{1}] {\preddata};
    \addplot[red!50, opacity=0.5] fill between[of=upper and lower];

    \addplot[name path=upper, draw=none]
        table [x expr=\coordindex, y expr=\thisrowno{0}+\thisrowno{1}] {\VIdata};
    \addplot[name path=lower, draw=none]
        table [x expr=\coordindex, y expr=\thisrowno{0}-\thisrowno{1}] {\VIdata};
    \addplot[green!60!black!50, opacity=0.5] fill between[of=upper and lower];

    \addplot[name path=upper, draw=none]
        table [x expr=\coordindex, y expr=\thisrowno{0}+\thisrowno{1}] {\initdata};
    \addplot[name path=lower, draw=none]
        table [x expr=\coordindex, y expr=\thisrowno{0}-\thisrowno{1}] {\initdata};
    \addplot[blue!20, opacity=0.5] fill between[of=upper and lower];
}
\nextgroupplot[title={PID}]
\plotcase{red}{csv_files/PID_opt.csv}{csv_files/PID_init.csv}
{csv_files/PID_VI.csv}
\addplot[dashed, gray, thick, domain = 0:50]{3.14};

\end{groupplot}
\end{tikzpicture}
\caption{PID optimization from ideal trajectories keeping a stable dynamic with $\theta = 3.14$ leads to the red trajectory, starting from the blue one. In green the trajectory obtained by VI.}
\label{fig:PID_opt_reach}
\end{figure}

\subsection{Hyperparamters Program Optimization}
\label{app:hyper}

In Table~\ref{tab:hyper}, we report the hyperparameters selected through tuning during the experiments on parameter synthesis of probabilistic programs via data likelihood optimization.

\begin{table}[H]
\centering
\begin{tabular}{l|cc|cccc|cc}
\toprule
\textbf{Model} &
\multicolumn{2}{c|}{\textbf{VI}} & 
\multicolumn{4}{c|}{\textbf{MCMC}} & 
\multicolumn{2}{c}{\textbf{DeGAS}} \\
\cmidrule(lr){2-3} \cmidrule(lr){4-7} \cmidrule(lr){8-9}
 & \textbf{LR} & \textbf{Steps}
  & \textbf{Samples} & \textbf{Warmup} & \textbf{R-hat} & \textbf{N eff} 
 & \textbf{LR} & \textbf{Steps} \\
\midrule
Bernoulli & $0.05$ & $247$ & $500$ & $50$ & $1.005$  & $802$ & $0.01$ & $248$ \\
Burglary  & $0.05$ & $1000$ & $500$ & $50$ & $1.003$ & $1008$  & $0.005$ & $295$ \\
ClickGraph & $0.001$ & $1000$ & $500$ & $50$ & $1.018$ & $333$ & $0.2$ & $216$\\
ClinicalTrial  & $0.2$ & $1000$ & $500$ & $50$ & $1.021$ & $146$ &  $0.001$ & $500$\\
Grass  & $0.01$ & $1000$ & $500$ & $50$ & $1.004$  & $1444$ &  $0.1$ & $500$\\
MurderMistery  & $0.001$ & $1000$ & $500$ & $50$  & $1.001$ & $1036$ & $0.01$ & $239$\\
SurveyUnbiased  & $0.01$ & $675$ & $500$ & $50$ & $1.001$ & $1382$ & $0.01$ & $500$\\
TrueSkills  & $0.2$ & $1000$ & t.o. & t.o. & t.o. & t.o. & $0.2$ & $367$\\
TwoCoins  & $0.05$ & $1000$ & $500$ & $50$ & $1.015$ & $279$  & $0.001$ & $337$\\
AlterMu & $0.0005$ & $1000$ & t.o. & t.o.  & t.o. & t.o.  & $0.05$ & $235$\\
AlterMu2  & $0.001$ & $1000$ & $1500$ & $50$  & $1.040$ & $81$ & $0.2$ & $228$\\
NormalMixtures & $0.0005$ & $1000$ & $500$ & $50$ & $1.018$& $457$ & $0.01$ & $500$\\
PID & $0.05$ & $1000$ & n.c. & n.c. & n.c. & n.c. & $0.2$ & $1000$ \\

\bottomrule
\end{tabular}
\caption{Tuned hyperparameters (see Section \ref{sec:experiments}) used for the synthesis of probabilistic program parameters.}
\label{tab:hyper}
\end{table}

\subsection{CPS}
\label{app:cps}

We illustrate the three CPSs considered in our study, showing their mode-dependent dynamics, switching conditions, and key model parameters.

\begin{figure*}[h!]
\centering

\begin{minipage}{0.45\textwidth}
\centering
\adjustbox{scale=0.90, valign=t}{
\begin{tikzpicture}[
    node distance=1cm,
    every node/.style={font=\scriptsize, align=center},
    state/.style={rectangle, rounded corners, draw=black, thick,
                  text width=3.5cm, minimum height=1.5cm, inner sep=1.5pt, fill=gray!10},
    trans/.style={-{Stealth[length=1.5mm,width=0.9mm]}, very thick}
]
\node[state] (on) {\textbf{Heater ON}\\[1pt]$\frac{dT}{dt}=-kT+h+\mathcal{N}(0,\epsilon)$};
\node[state, below=of on] (off) {\textbf{Heater OFF}\\[1pt]$\frac{dT}{dt}=-kT+\mathcal{N}(0,\epsilon)$};

\draw[trans] (on.south east) .. controls +(0.5,-0.2) and +(0.5,0.2) .. node[right, text width=1.5cm, align=left] {$T>t_{off}$} (off.east);
\draw[trans] (off.north west) .. controls +(-0.5,0.2) and +(-0.5,-0.2) .. node[left, text width=1.5cm, align=right] {$T<t_{on}$} (on.west);
\end{tikzpicture}
}
\end{minipage}\hfill
\begin{minipage}{0.41\textwidth}
\caption*{\textbf{Thermostat} For the optimization with respect to trajectories, we consider $k = 0.01$, $h = 0.5$, inital $T = 16$.
In the reachability optimization (with $k = 1$, $h = 30$, and inital $T \sim \normal(30,1)$), under the assumption of independence between the events at different time steps, the loss function becomes the product of the CDF inside the region of interest multiplied by the PDF of being in mode "ON" both calculated at the three time steps. 
   }
\end{minipage}

\end{figure*}

\begin{figure*}

\begin{minipage}{0.45\textwidth}
\centering
\adjustbox{scale=0.90, valign=t}{
\begin{tikzpicture}[
    every node/.style={font=\scriptsize, align=center},
    state/.style={rectangle, rounded corners, draw=black, thick,
                  fill=gray!10, text width=3.5cm, minimum height=1.5cm, inner sep=1.5pt},
    trans/.style={-{Stealth[length=1.5mm,width=0.9mm]}, very thick}
]
\node[state] (active) at (0,1.0) {\textbf{Active (gear=i)}\\[1pt]$\frac{dv}{dt}=v\,\alpha(i,v)+\mathcal{N}(5,1)$};
\node[state] (neutral) at (0,-1.5) {\textbf{Neutral (gear=0)}\\[1pt]$\frac{dv}{dt}=-0.0005v^2+\mathcal{N}(0,0.5)$\\$\frac{dw}{dt}=-0.1$};

\draw[trans, bend left=15] (active.south east) .. controls +(0.5,-0.2) and +(0.5,0.2) .. node[right, text width=1.8cm, align=left] {$v\ge s_i $ gear=0, next=i+1, $w=0.8$} (neutral.east);
\draw[trans, bend left=15] (neutral.north west) .. controls +(-0.5,0.2) and +(-0.5,-0.2) .. node[left, text width=1.8cm, align=right] {$w<0$ gear=next} (active.west);
\end{tikzpicture}
}
\end{minipage}\hfill
\begin{minipage}{0.41\textwidth}
\caption*{\textbf{Gearbox} 
Dynamics are defined by $\alpha(i,v) = \frac{1}{1 + \frac{(v-p_i)^2}{25}}$ with $p_i = (5, 15, 25, 40, 60)$. 
In the reachability  optimization the loss function is set as the marginal $CDF(v_i \leq 16) \hspace{2mm} \forall i$ in order to maximize the probability of the car keeping always a velocity under $16$.}
\end{minipage}

\end{figure*}
\begin{figure*}

\begin{minipage}{0.45\textwidth}
\centering
\adjustbox{scale=0.90, valign=t}{
\begin{tikzpicture}[
    node distance=1cm,
    every node/.style={font=\scriptsize, align=center},
    state/.style={rectangle, rounded corners, draw=black, thick,
                  text width=3.5cm, minimum height=1.5cm, inner sep=1.5pt, fill=gray!10},
    trans/.style={-{Stealth[length=1.5mm,width=0.9mm]}, very thick}
]
\node[state] (fall) {\textbf{Fall (mode -1)}\\[1pt]$\frac{dv}{dt}=-g +\mathcal{N}(0,0.1)$};
\node[state, below=of fall] (refl) {\textbf{Reflection (mode 1)}\\[1pt]$\frac{dv}{dt}=- g - \frac{Rv + \frac{H}{C}}{m}+\mathcal{N}(0,0.1)$};

\draw[trans] (fall.south east) .. controls +(0.5,-0.2) and +(0.5,0.2) .. node[right, text width=1.5cm, align=left] {$H \leq 0$} (refl.east);
\draw[trans] (refl.north west) .. controls +(-0.5,0.2) and +(-0.5,-0.2) .. node[left, text width=1.5cm, align=right] {$H > 0$} (fall.west);
\end{tikzpicture}
}
\end{minipage}\hfill
\begin{minipage}{0.41\textwidth}
\caption*{\textbf{Controlled bouncing ball.} 
The position of a ball of mass $m= 7$ varies accordingly to $\frac{dH}{dt}= v +\mathcal{N}(0,\epsilon)$, starting from initial height $H_0\sim \mathcal{N}(9, 1)$, with  $g = 9.81$, $m = 7$. 
The reachability optimization target is expressed by $1 - CDF(H_i \geq 7)$ where $H_i$ is the maximum height achieved by the ball after bouncing.}
\end{minipage}

\label{fig:hybrid_systems}
\end{figure*}

\subsection{Analysis of reachability properties} \label{app:reach_prop}

The target of the optimization in the framework of the reachability properties is computed with respect to the DeGAS semantics. 
To obtain some qualitative insights on how this relate to the real system and the SOGA semantics, and on how the error may accumulate on long execution traces, we consider the example with largest iterations number, Controlled Bouncing Ball, and plot the mean trajectories computed by DeGAS, SOGA and randomly sampled trajectories from the real system. 
Figure \ref{plot:comparison_SOGA_real} shows that there is no significant discrepancy induced by the smoothing, even after a large number of iterations.
Moreover, despite the error induced by the Gaussian approximation, there is a substantial agreement among the mean trajectories estimated by SOGA and DeGAS and the simulated trajectories.

\begin{figure}[t]
    \centering
\begin{tikzpicture}
\begin{groupplot}[
    group style={group size=2 by 2, horizontal sep=0.9cm, vertical sep=1.7cm},
    width=8cm, height=5cm,
    xlabel={Time step},
    ylabel={},
    grid=both,
    ticklabel style={font=\scriptsize},
    label style={font=\scriptsize},
    title style={font=\small, yshift=-1pt},
    legend style={
        at={(0.5,-0.15)},
        anchor=north,
        legend columns=-1,
        font=\scriptsize,
    },
]

\newcommand{\plotcase}[4]{%
    \pgfplotstableread[col sep=comma]{#2}\rawtraj
    \pgfplotstableread[col sep=comma]{#3}\preddata
    \pgfplotstableread[col sep=comma]{#4}\initdata
    
    \pgfplotstablegetcolsof{\rawtraj}
    \pgfmathtruncatemacro{\numcols}{\pgfplotsretval-1}
    
    \foreach \r in {0,...,14} { 
        \addplot[black!30, very thin] table [
            x expr=\coordindex,
            y expr=\thisrowno{\r}
        ] {\rawtraj};
    }
    \addlegendimage{gray!60, thin}

    \addplot[very thick, red] table [x expr=\coordindex, y index=0] {\preddata};

    \addplot[very thin, blue!60] table [x expr=\coordindex, y index=0] {\initdata};

    \addplot[name path=upper, draw=none]
        table [x expr=\coordindex, y expr=\thisrowno{0}+\thisrowno{1}] {\preddata};
    \addplot[name path=lower, draw=none]
        table [x expr=\coordindex, y expr=\thisrowno{0}-\thisrowno{1}] {\preddata};
    \addplot[red!50, opacity=0.5] fill between[of=upper and lower];

    \addplot[name path=upper, draw=none]
        table [x expr=\coordindex, y expr=\thisrowno{0}+\thisrowno{1}] {\initdata};
    \addplot[name path=lower, draw=none]
        table [x expr=\coordindex, y expr=\thisrowno{0}-\thisrowno{1}] {\initdata};
    \addplot[blue!20, opacity=0.5] fill between[of=upper and lower];
}


\nextgroupplot[title={Controlled Bouncing Ball}]
\plotcase{green!60!black}{csv_files/bouncing_ball_opt_traj.csv}{csv_files/bouncing_ball2_opt.csv}{csv_files/bouncing_ball_SOGA.csv}
\addplot[dashed, gray, thick, domain = 0:35]{7};
\end{groupplot}

\end{tikzpicture}
\caption{DeGAS semantics (red) compared to SOGA semantics (blue) and real system (gray trajectories), all considering the optimized parameters obtained by DeGAS optimization of reachability properties}
\label{plot:comparison_SOGA_real}
\end{figure}

\subsection{Comparison with SOGA} \label{app:SOGAcomp}

For every program in Table \ref{tab:vi_mcmc_degas} we compared the mean value of a target variable computed in DeGAS and in SOGA for fixed value of the parameters, to asses the error introduced by the smoothing. Our results, reported in Table \ref{tab:means}, show that for the smoothing hyperparameters used in our experimental setting ($\smoothEps = 1e-3, \smoothDelta = \sqrt{\smoothEps}$) no error is introduced up to third decimal digit.

\begin{table}[H]
\centering
\begin{adjustbox}{width=1\textwidth}
\begin{tabular}{l|c|c|c|l}
\toprule
\textbf{Model} & \textbf{Target} & \textbf{DeGAS} & \textbf{SOGA} & \textbf{Parameters}\\
\midrule
Bernoulli & $y$ & $0.810$ & $0.810$ & $p = 0.794$ \\
Burglary  & $called$ & $0.237$ & $0.237$ & $pe = 0.063; pb = 0.132$\\
ClickGraph & $sim$ & $0.617$ & $0.617$ & $p = 0.646 $ \\
ClinicalTrial  & $ytreated$ & $0.789$ & $0.789$ & $pe = 0.054; pc = 0.797; pt= 0.778$ \\
Grass  & $wetGrass$ & $0.666$ & $0.666$ & $pcloudy=0.546; p1 = 0.705; p2 = 0.910; p3 = 0.864$ \\
MurderMistery  & $withGun$ & $0.716$ & $0.716$ & $palice = 0.138$ \\
SurveyUnbiased  & $ansb2$ & $0.713$ & $0.713$ & $bias1 = 0.778; bias2 =0.723$ \\
TrueSkills  & $perfC$ & $119.021$ & $119.021$ & $pa = 105.990; pb = 89.931; pc = 119.021$\\
TwoCoins  & $both$ & $0.158$ & $0.158$ & $first = 0.367; second = 0.367$\\
AlterMu & $y$ & $4.530$ & $4.530$ & $p1 = 0.617; p2 = 0.617; p3 = -3.388$ \\
AlterMu2  & $y$ & $3.553$ & $3.553$ & $muy = 3.553; vary = 2.838$\\
NormalMixtures & $y$ & $-3.192$ & $-3.192$ & $theta = 0.718; p1 = -4.788; p2 = 0.627$ \\
PID & $v$ & $-0.008$ & $-0.008$ & $s0 = 46.500; s1 = -23.496; s2 = -0.495$ \\

\bottomrule
\end{tabular}
\end{adjustbox}
\caption{Comparison among the mean values of a selected target variable computed by DeGAS and SOGA for the programs of Table \ref{tab:vi_mcmc_degas} and fixed values of the parameters.}
\label{tab:means}
\end{table}

\end{document}